\newtheorem*{theorem*}{Theorem}
\newtheorem{proposition}{Proposition}
\newtheorem*{lemma*}{Lemma}
\newtheorem*{corollary*}{Corollary}
\newcommand{\norm}[1]{\left\lVert#1\right\rVert}
\title{Outlier Detection for Multi-Network Data}
\date{June 24, 2022}	
\author{ Pritam Dey \\
	Department of Statistical Science\\
	Duke University\\
	Durham, NC 27708, USA \\
	\texttt{pritam.dey@duke.edu} \\
	\And
	Zhengwu Zhang\\
	Statistics and Operations Research\\
	University of North Carolina at Chapel Hill\\
	Chapel Hill, NC 27599, USA \\
	\texttt{zhengwu\_zhang@unc.edu} \\
    \And
	David B. Dunson\\
	Department of Statistical Science\\
	Duke University\\
	Durham, NC 27708, USA \\
	\texttt{dunson@duke.edu} \\
}
\begin{document}
\maketitle
\graphicspath{{figures/}}

\begin{abstract}
It has become routine in neuroscience studies to measure brain networks for different individuals using neuroimaging.  These networks are typically expressed as adjacency matrices, with each cell containing a summary of connectivity between a pair of brain regions.  There is an emerging statistical literature describing methods for the analysis of such multi-network data in which nodes are common across networks but the edges vary.  However, there has been essentially no consideration of the important problem of outlier detection. In particular, for certain subjects, the neuroimaging data are so poor quality that the network cannot be reliably reconstructed.  For such subjects, the resulting adjacency matrix may be mostly zero or exhibit a bizarre pattern not consistent with a functioning brain.  These outlying networks may serve as influential points, contaminating subsequent  statistical analyses.  We propose a simple Outlier DetectIon for Networks (ODIN) method relying on an influence measure under a hierarchical generalized linear model for the adjacency matrices.  An efficient computational algorithm is described, and ODIN is illustrated through simulations and an application to data from the UK Biobank. ODIN was successful in identifying moderate to extreme  outliers. Removing such outliers can significantly change inferences in downstream applications.
\\
\textbf{Code availability:} ODIN has been implemented in both Python and R and these implementations along with other code are publicly available at \href{http://www.github.com/pritamdey/ODIN-python}{github.com/pritamdey/ODIN-python} and \href{http://www.github.com/pritamdey/ODIN-r}{github.com/pritamdey/ODIN-r} respectively.\\
\end{abstract}

\keywords{Brain structural networks \and Hierarchical GLM \and Influence function \and Outlier detection}

\section{Introduction}

In recent years there has been substantial interest in brain functional and structural connectomics.  Although the proposed methodology is general, we are motivated by structural connectomes, consisting of the collection of white matter fiber bundles connecting various regions of the brain. Recent advances in non-invasive brain imaging have made available large brain imaging datasets including the Human Connectome Project \citep{van2013wu}, the Adolescent Brain Cognitive Development Study \citep{casey2018adolescent} and the UK Biobank \citep{miller2016multimodal}. Relying on the availability of such datasets, there is a large literature on statistical analysis of brain networks developing models for characterizing inter-individual variation \citep{wang2019common, durante2017nonparametric, aliverti2019spatial}, ANOVA-like hypothesis testing \citep{ginestet2017hypothesis}, and network regression \citep{wang2017bayesian, zhang2018network, zhang2019tensor}.

An important issue to consider in brain network analysis is reconstruction error from available neuroimaging data.  Even if image acquisition is conducted correctly for a subject who remains still in the scanner, there is inevitably some amount of error in conducting the inverse problem of inferring the white matter fiber bundles based on indirect measurements \citep{fornito2013graph}. Moreover, due to the long preprocessing pipeline in structural connectome reconstruction, a small measurement error in the raw imaging data can be amplified and yield substantial errors in the inferred structural connectome. In this article, instead of being concerned with small measurement errors that are difficult to distinguish from actual biological variation, we focus on identifying outlying brain networks that are almost certainly attributable to measurement errors in reconstructing the connectome.  Such gross errors can potentially arise due to problems during the data collection phase; for example, due to non-negligible movement of the patient in the scanner \citep{baum2018impact} or mistakes in preprocessing large of amounts of data using complex structural connectome reconstruction pipelines \citep{zhang2018mapping}. 

Some examples of outlying brain connectomes are shown in Figure \ref{adjfigintro}. A major concern is that such outlying networks may serve as influential observations in statistical analyses of brain connectomes, leading to degradation of the results.  For example, suppose we are attempting to represent variation across subjects in their brain connectomes via an embedding, such as the PCA method of \citet{zhang2019tensor}.  Including outlying brains can lead to poor quality embeddings, as PCA needs to characterize not just normal biological variation in the brain connectomes but also the outlying brains.  Analyses seeking to infer relationships between brain networks and human traits can similarly be contaminated by outliers, obscuring true relationships.  Current practice tends to either ignore outliers or to apply informal quality controls, such as visual examination of connectome matrices \citep{alfaro2018image}.  Such checks are overly-subjective and time consuming for large datasets.  Potentially, one can instead apply existing statistical outlier detection methods \citep{hawkins1980identification} to low-dimensional summary statistics of the connectome (for example, graph metrics such as node degree, average path length, and eigenvector centrality).  However, such an approach will be highly sensitive to the summaries chosen, and may miss certain outlying networks or remove non-outlying networks.

\begin{figure*}
\begin{subfigure}{.19\linewidth}
    \includegraphics[width=\linewidth]{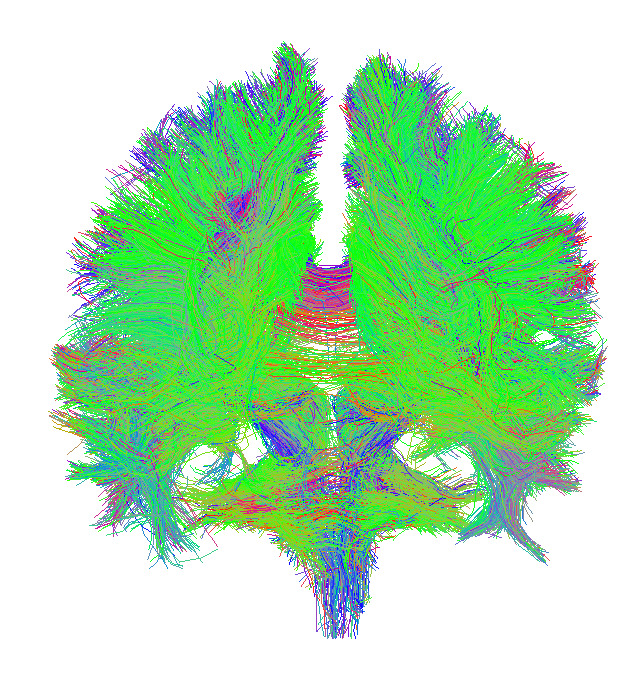}
\end{subfigure}
\hfill
\begin{subfigure}{.19\linewidth}
    \includegraphics[width=\linewidth]{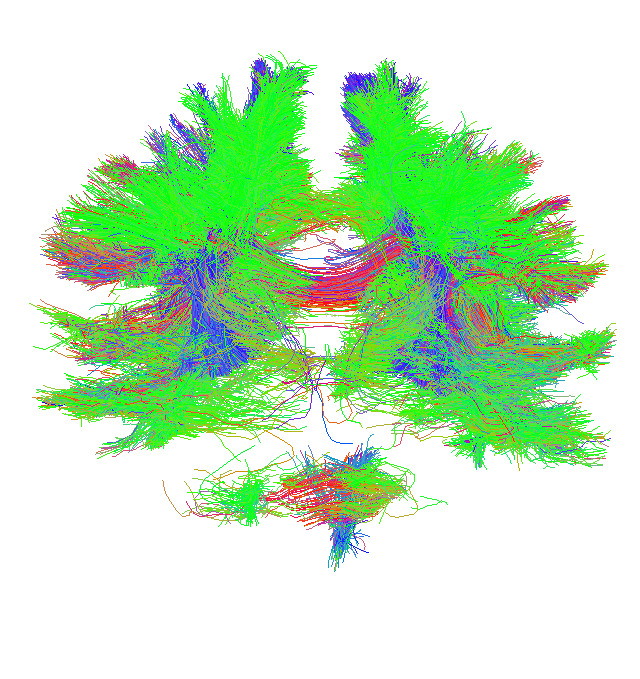}
\end{subfigure}
\hfill
\begin{subfigure}{.19\linewidth}
    \includegraphics[width=\linewidth]{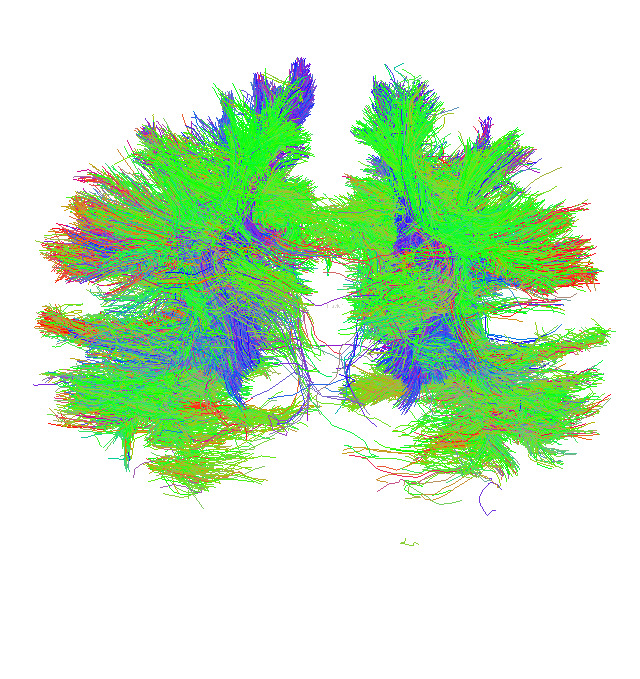}
\end{subfigure}
\hfill
\begin{subfigure}{.19\linewidth}
    \includegraphics[width=\linewidth]{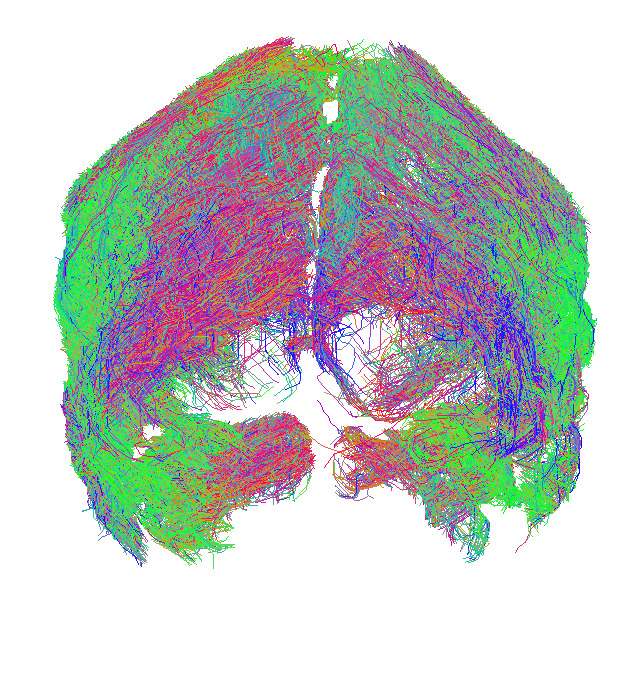}
\end{subfigure}
\hfill
\begin{subfigure}{.19\linewidth}
    \includegraphics[width=\linewidth]{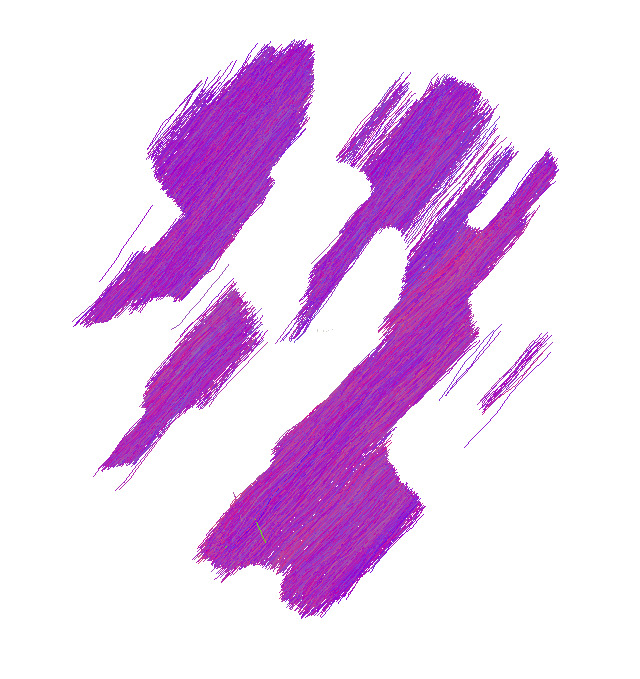}
\end{subfigure}
\\
\begin{subfigure}{.19\linewidth}
    \includegraphics[width=\linewidth]{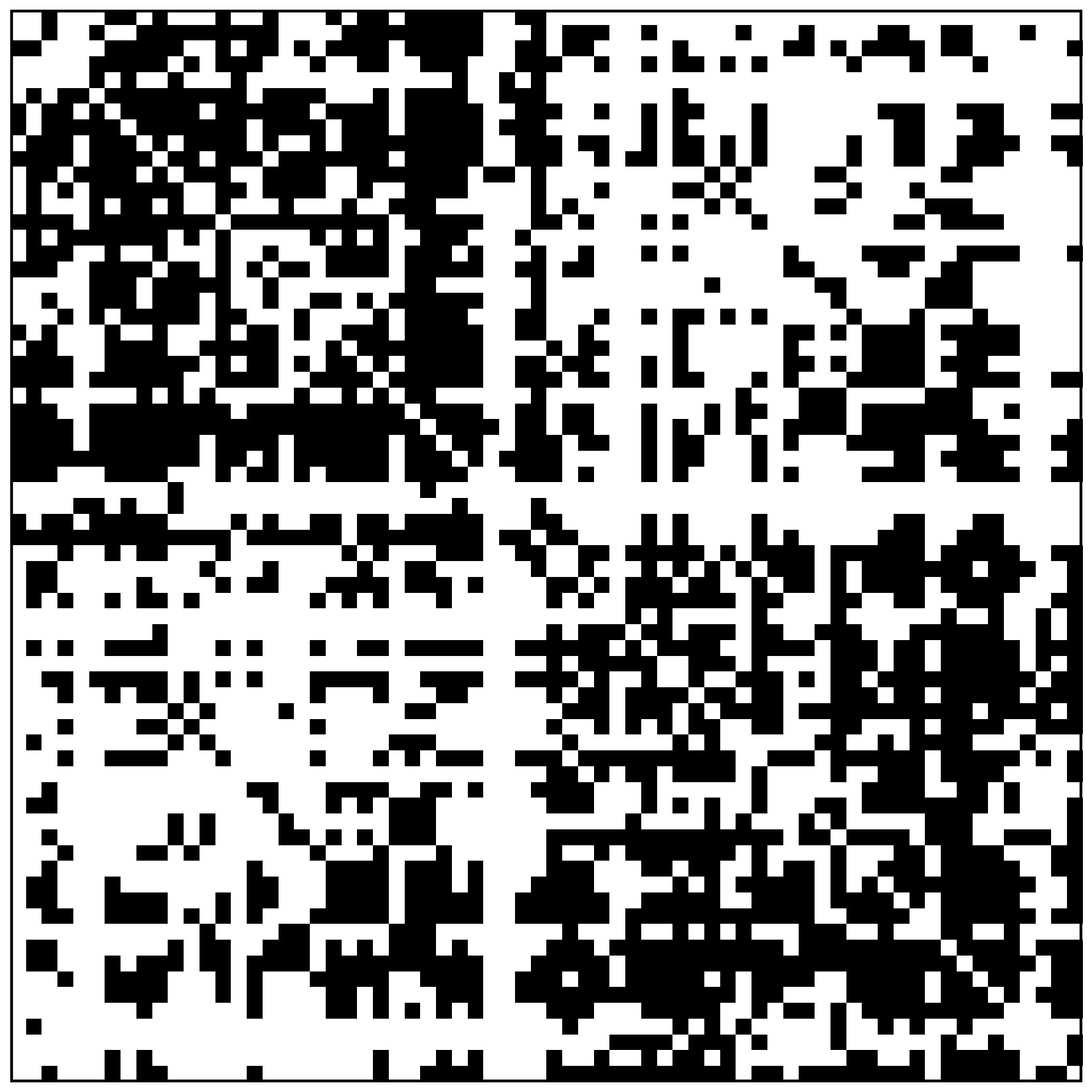}
\caption{}
\end{subfigure}
\hfill
\begin{subfigure}{.19\linewidth}
    \includegraphics[width=\linewidth]{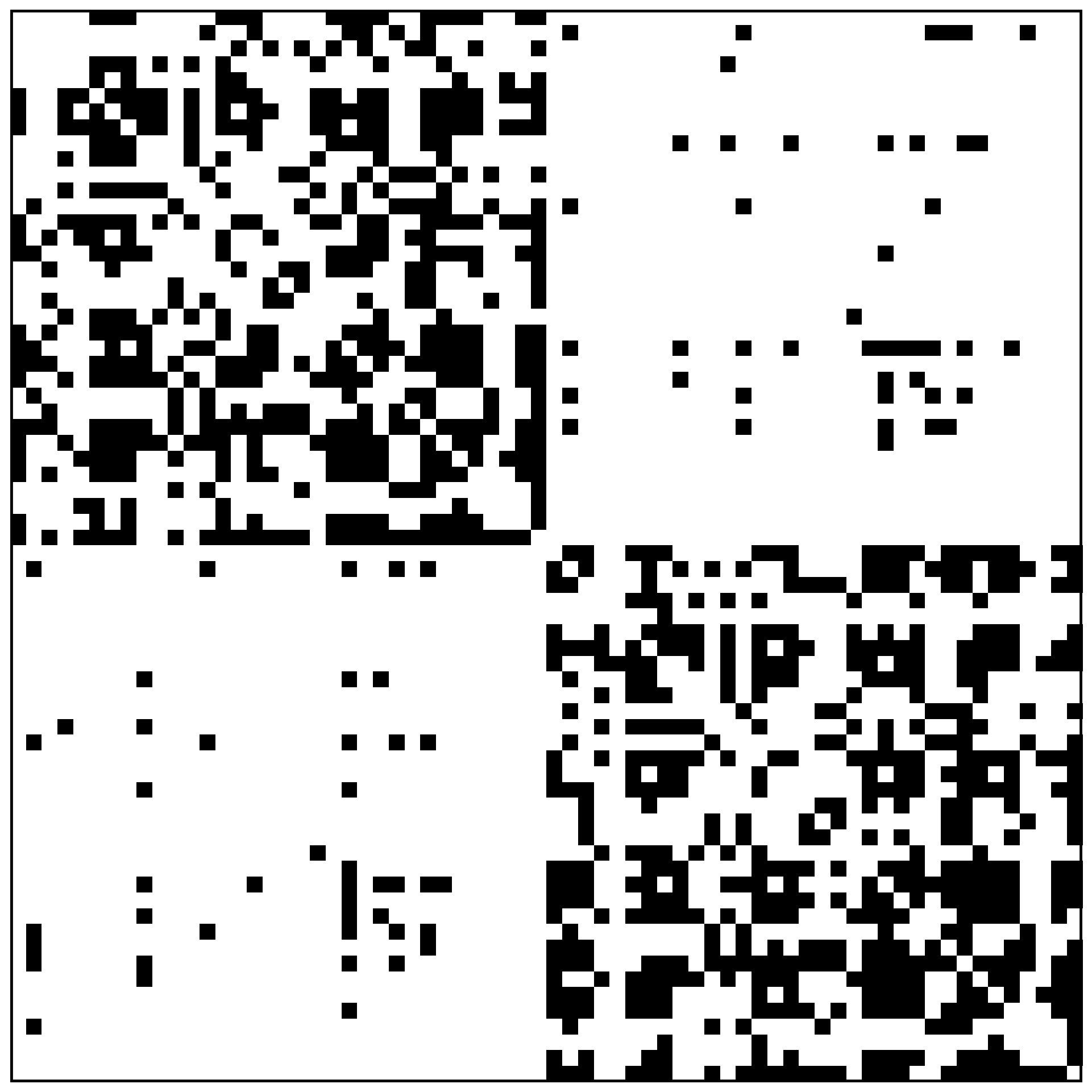}
\caption{}
\end{subfigure}
\hfill
\begin{subfigure}{.19\linewidth}
    \includegraphics[width=\linewidth]{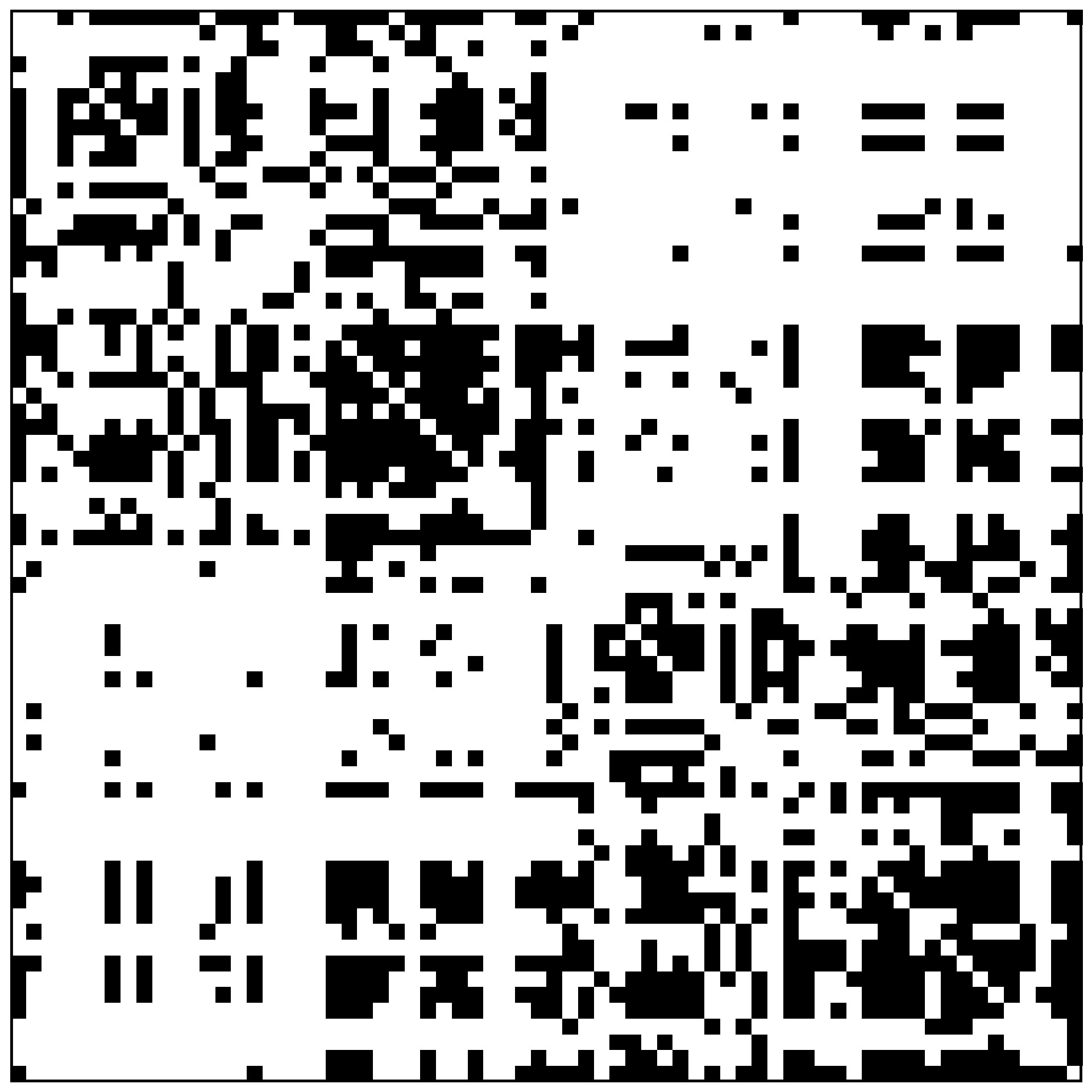}
    \caption{}
\end{subfigure}
\hfill
\begin{subfigure}{.19\linewidth}
    \includegraphics[width=\linewidth]{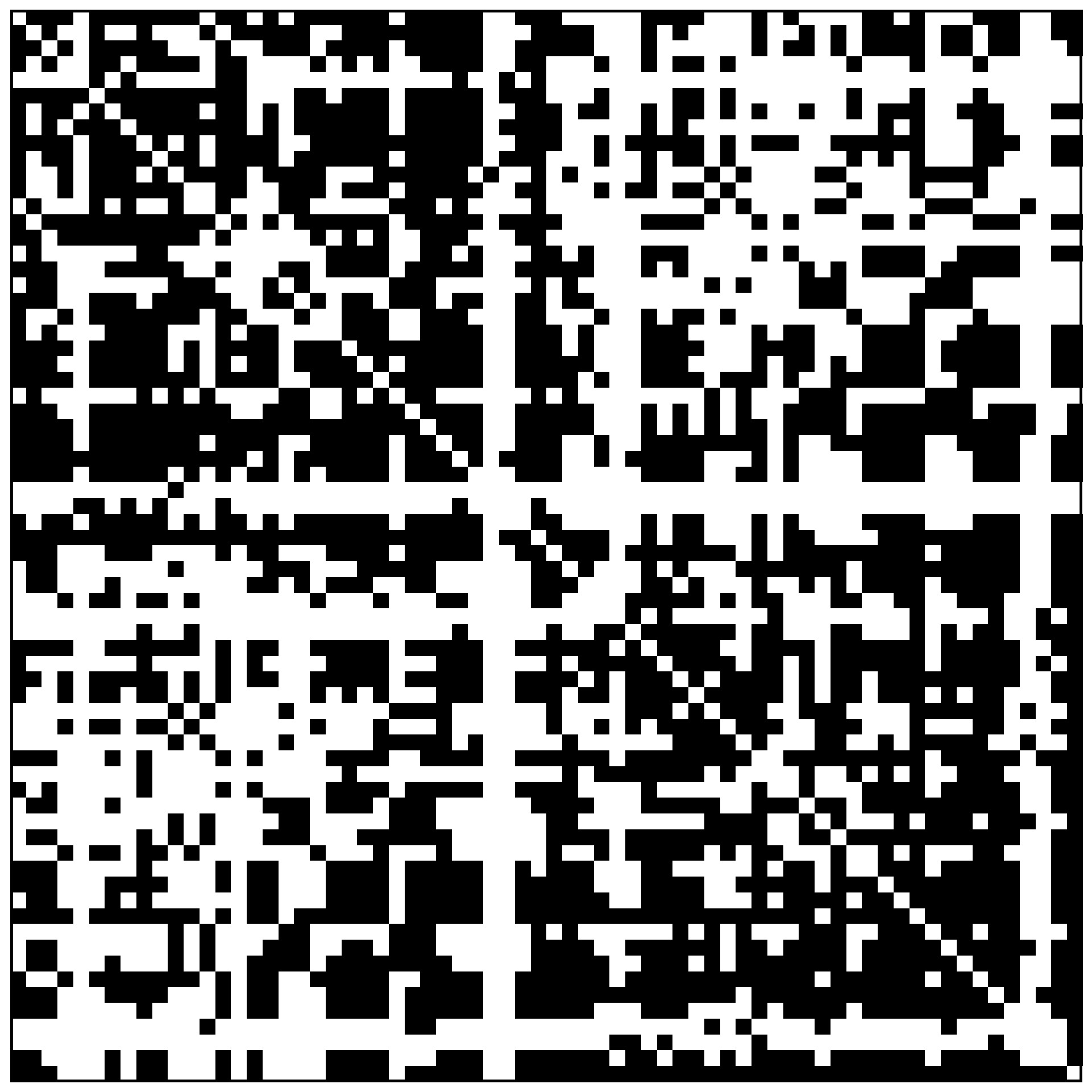}
    \caption{}
\end{subfigure}
\hfill
\begin{subfigure}{.19\linewidth}
    \includegraphics[width=\linewidth]{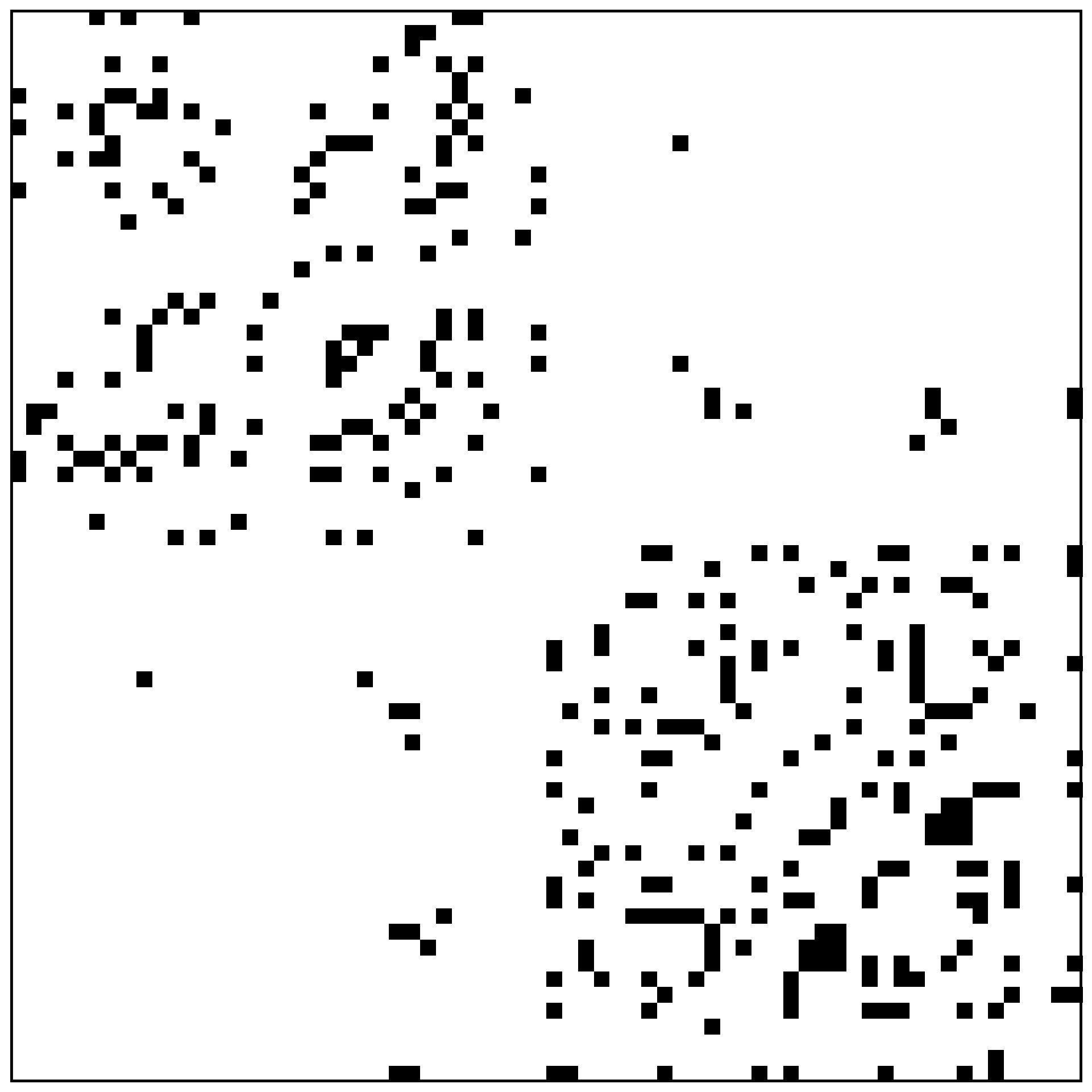}
    \caption{}
\end{subfigure}
\caption{Brain fiber streamlines from diffusion MR imaging (top row) with corresponding binary adjacency matrices (bottom row) of some subjects from the UK Biobank dataset. The streamlines are visualized using TrackVis \citep{wang2007trackvis} and colored by orientation (i.e., left to right: red, anterior to posterior: green, superior to inferior: blue). In these tractography diagrams, the anterior side of the brain is facing inwards into the page. The matrices in the bottom row are the ones used by ODIN. These adjacency matrices are not directly available from UK Biobank. We preprocessed the raw data using the PSC pipeline \citep{zhang2018mapping} to extract these adjacency matrices. In these matrices, black indicates presence of at least one fiber connecting the two corresponding regions and white represents absence of such fibers. The brain network represented by the streamlines and adjacency matrix in (a) is a typical non-outlier. The networks shown in (b) - (e) are outliers of various kinds selected from among the outliers detected by ODIN. }
\label{adjfigintro}
\end{figure*}

Ideally, we could apply an outlier detection method specifically designed for multi-network data, to identify individual networks that are fundamentally different than the bulk of the networks in a dataset.  However, to our knowledge, there are no such methods available in the literature.  To bridge this gap, we propose a simple model-based
Outlier DetectIon for Networks (ODIN) 
method relying on a hierarchical logistic regression model and a measure of the statistical influence of each subject's brain connectome on the parameter estimates in this model. The logistic model includes prior knowledge of the anatomical structure of the brain. Although our initial approach is for binary adjacency matrices conveying information about connectivity/no-connectivity of pairs of ROIs, ODIN can be trivially extended to weighted adjacency matrices by using alternative generalized linear models (GLMs) in place of logistic regression.  The model can also include covariate information about the subject, such as age and gender. A key advantage of ODIN is computational scalability to massive datasets containing tens of thousands of connectomes.

We apply ODIN to 18,083 structural connectomes extracted from a large cohort, the UK Biobank data, which has a number of moderate to extreme outliers. Figure \ref{adjfigintro} shows a non-outlier and four different outliers from this dataset. Our objective is to detect outliers, and not to use the logistic model for inference directly.  ODIN can be used in a data cleaning step to remove outliers, or can provide numerical influence scores that can be used for down-weighting of overly influential observations in subsequent robust analyses.  We will demonstrate the former approach in simulation studies and an application involving the UK Biobank data (henceforth refered to as UKB data).

\section{Method}

It is common practice in the brain network analysis literature to partition the brain into small regions based on anatomical considerations \citep{desikan2006automated}. These small regions are known as regions of interest (ROIs). The brain structural connectome is thus summarized as a network with these ROIs as nodes and the fibers connecting these ROIs are the edges in the network. We use this framework for our outlier detection method, ODIN. Let $A_i$ denote the binary adjacency matrix of an undirected brain network on $V$ ROIs (assuming no self connections) for subject $i$, $i=1,2,\hdots,N$. The symmetry of the $A_i$'s coupled with the lack of self connections allows us to denote our networks in terms of the vectors $\bm{a}_i = (A_{i[2,1]},A_{i[3,1]},A_{i[3,2]},\hdots,A_{i[V,V-1]})^T=(a_{i1},a_{i2},\hdots,a_{iL})^T$. Here $L=\frac{V(V-1)}{2}$. Each $l \in \{1,2,...,L\}$ represents an edge in the network connecting two ROIs, say $u$ and $v$. For each subject $i$ and edge $l$ representing the pair of ROIs $\{u,v\}$, we let $hemi(u)$ and $lobe(u)$ represent the hemisphere and the lobe location of the ROI $u$, respectively. Similarly, $hemi(v)$ and $lobe(v)$ does the same for $v$. We model: 
\begin{align}
    {a}_{il} & \sim Bernoulli(\pi_{il}) \nonumber \\ 
    logit(\pi_{il}) &= z_l + \beta_{i,hemi(u),hemi(v)}+\beta_{i,lobe(u),lobe(v)}.
\label{eq:logit}
\end{align}

Model \eqref{eq:logit} characterizes variation across subjects in their brain networks in a semi-parametric manner. The baseline model for the connection probabilities,  $\pi_{0l}=1/(1+e^{-z_l})$, 
is nonparametric in allowing fully flexible variation in these probabilities across different pairs of ROIs.  This avoids imposing a particular model on the population-averaged connection probabilities between different regions of the brain.

We then allow a more restricted type of variation across subjects, with subject-specific deviations from the baseline edge probabilities depending on hemisphere and lobe locations of the two brain ROIs forming the edge. The coefficient $\beta_{i,hemi(u),hemi(v)}$ allows connection probabilities between regions $u$ and $v$ in subject $i$'s brain to vary depending on which hemisphere $u$ and $v$ belong to. Similarly the coefficient $\beta_{i,lobe(u),lobe(v)}$ allows the connection probability to vary depending on the lobe locations of $u$ and $v$. Since there are only two hemispheres and the number of lobes is relatively small, the number of subject-specific parameters for each subject is small. Due to the symmetric nature of the networks, we may assume: $\beta_{i,h_1,h_2}=\beta_{i,h_2,h_1}$ and $\beta_{i,l_1,l_2}=\beta_{i,l_2,l_1}$ for every choice of hemispheres $h_1$ and $h_2$ and lobes $l_1$ and $l_2$. We collect all these lobe and hemisphere parameters for subject $i$ into a vector $\beta_i$. These parameters $\beta_i$ allow for variations across subjects, in terms of lower or higher numbers of connections between specific lobes and hemispheres.  

Using the notation we introduced above, we can write $logit(\pi_{il})=z_l + x_l^T\beta_i$ for a suitable vector $x_l$ of 0's and 1's. We stack these vectors $x_l$ into a matrix $X=(x_1,x_2,...x_L)^T$. Further, we represent the vector of parameters $(z_1,z_2,...,z_L)^T$ as $Z$ and the vector $(\pi_{i1},\pi_{i2},...,\pi_{iL})^T$ as $\bm{\pi}_i$.

Due to the large number of parameters in the baseline component and to allow cases in which certain edges are connected (or disconnected) for all subjects in the sample, we add a ridge (or $l_2$) regularization to $z_l$. Estimation of the parameters $Z$ and the $\beta_i$'s then proceeds by maximizing the following penalized log-likelihood:
\begin{equation}
    \label{loglikelihood}
    \mathcal{L}(Z,\{\beta_j\}_{j=1}^N)=\frac{1}{N}\sum_{i=1}^N \sum_{l=1}^L \left[a_{il}\eta_{il}-log(1+e^{\eta_{il}})\right] -\frac{\lambda}{2}\norm{Z}^2_2,
\end{equation}

where $\eta_{il}=logit(\pi_{il})=z_l + x_l^T\beta_i$.

The Hessian matrix of the objective function has a block arrowhead structure, which can be exploited to obtain a fast algorithm for estimation. While the estimation step can be done using a variety of different algorithms, we recommend a slightly modified version of the standard Majorization - Minimization (MM) algorithm for logistic regression to find the parameter values that maximize the equation \eqref{loglikelihood} (technically we minimize its negative). The algorithm is discussed in detail in section S1.1 in the supplementary material. This algorithm is chosen as it scales well with sample size, $N$ and the number of edges, $L$. In fact each iteration of the algorithm has linear time complexity in $N$ and $L$. This is verified in the simulation section.

In exploring methodology for outlier detection of brain networks, we considered a wide variety of complex and flexible models for characterizing variation across individuals in their brain structure including a hierarchical latent space model. However, computational time was a substantial barrier to implementation of more elaborate models. Our objective being the identification of outlying networks as a step in the cleaning and preprocessing of the data before subsequent statistical analyses, we wanted our model to be simple and computationally efficient while still being effective in detecting outliers. Our relatively simple logistic model did an excellent job in fulfilling all of these objectives.  Indeed, if a model is too flexible in characterizing inter-individual variability, then it can end up not flagging any of the networks as outliers. To identify potentially outlying subjects through our model, we consider appropriate influence measures.

One way of measuring the influence of subject $i$ on the penalized maximum likelihood estimates (pMLEs) for the parameters in equation \eqref{loglikelihood} is by the change in the pMLE of $Z$ when subject $i$ is dropped from the sample. More concretely, if the pMLE of $Z$ is denoted by $\widehat{Z}$ and the pMLE of $Z$ is recalculated after dropping subject $i$ from the sample is denoted by $\widehat{Z}_{-i}$ then an influence measure may be $\norm{\widehat{Z}_{-i}-\widehat{Z}}$ which is analogous to DFBETA, a popular influence measure in regression analysis. However, since the maximizer of equation \eqref{loglikelihood} does not have an analytical form and has to be numerically calculated, it is not computationally feasible to compute the pMLE exactly for the case of omission of each subject when the number of subjects is large. So instead, we suggest for every subject $i$, to do just one step of the Newton-Raphson algorithm for calculation of the pMLE after dropping subject $i$ but with the already calculated pMLE (with all subjects included) as the initial values. We do this since dropping subject $i$ leaves the resultant log-likelihood still close to the original log-likelihood in \eqref{loglikelihood}.  We denote this approximation by $Z^{(1)}_{-i}$. Hence, our influence measure is simply the Euclidean distance between $\widehat{Z}$ and $Z^{(1)}_{-i}$. The following result gives an explicit formula in terms of the overall pMLEs of $Z$ and the $\beta_i$'s. 

\begin{proposition}
\label{mainproposition}
Suppose the pMLE obtained by maximizing (\ref{loglikelihood}) is $(\widehat{Z},\widehat{\beta}_1,...,\widehat{\beta}_N)$. Then the update in the common parameters $Z$ in the first step of the Newton-Raphson algorithm for re-estimation of the pMLE after deletion of the $i^{th}$ unit starting from $(\widehat{Z},\widehat{\beta}_1,...,\widehat{\beta}_{i-1},\widehat{\beta}_{i+1},...\widehat{\beta}_N)$ is
\begin{equation}
    \begin{split}
         Z^{(1)}_{-i} - \widehat{Z} = -\left[\lambda (N-1) I +\sum_{j \neq i}  W_j \right. \left. -\sum_{j \neq i} B_{j}^T Q_j^{-1} B_{j} \right]^{-1}
        \left[(\bm{a}_i - \widehat{\bm{\pi}}_i)-\lambda\widehat{Z}\right]
    \end{split}
\end{equation}
where $W_j$ is a diagonal matrix with entries $\left[\pi_{j1}(1-\pi_{j1}),\right.\allowbreak \left.\pi_{j2}(1-\pi_{j2}),\right.\allowbreak 
\left.\hdots,\right.\allowbreak
\left.\pi_{jL}(1-\pi_{jL})\right]$, $B_j = X^T W_j$, and $Q_j = X^T W_j X$. These are all evaluated at $(\widehat{Z},\widehat{\beta}_1,...,\widehat{\beta}_{i-1},\widehat{\beta}_{i+1},...\widehat{\beta}_N)$.
\end{proposition}
We show a proof of the above proposition in section S1.2 of the supplementary material. The formula for $Z^{(1)}_{-i} - \widehat{Z}$ given in proposition \ref{mainproposition} needs one more approximation to make it computationally feasible. The matrix $\left[\lambda (N-1) I +\sum_{j \neq i}  W_j   -\sum_{j \neq i} B_{j}^T Q_j^{-1} B_{j} \right]$ depends on $i$, and hence we need to calculate and invert this huge matrix for each subject. We can avoid that by approximating it instead by $\Gamma =\gamma_N \left[\lambda N I + \sum_{j=1}^N  W_j -\sum_{j = 1}^N  B_{j}^T Q_j^{-1} B_{j} \right]$, where $\gamma_N = (N-1)/N$. This solves the problem as now we only need to invert $\Gamma$ once. After making these approximations, we obtain the following influence measure for brain networks: 
\begin{equation} \label{influence1}
\begin{split}
    IM_1(i) &= \left\Vert \Gamma^{-1} \left[(\bm{a}_i - \widehat{\bm{\pi}}_i)-\lambda\widehat{Z}\right]\right\Vert_2
\end{split}
\end{equation}
A drawback of $IM_1(i)$ as the sole measure of influence is that it does not take into account the fact that the individual-specific parameters, $\beta_i$, can still be radically different for some networks than others. Those networks may not appear influential to the model, even though they can still be outliers. To flag such networks, we need another influence measure along with $IM_1(i)$. A suitable candidate that is easy to compute is the relative distance of the estimated $\widehat{\beta}_i$ from the bulk of the estimated $\widehat{\beta}_j$'s for the entire sample. With this in mind, we use the  Mahalanobis distance of $\widehat{\beta}_i$ from the mean of the $\widehat{\beta}_j$'s, $\overline{\beta}=\sum_{i=1}^N\widehat{\beta}_i/N$, as our second influence measure:
\begin{equation} \label{influence2}
\begin{split}
    IM_2(i) &= \left\Vert(\widehat{\beta}_i-\overline{\beta})^TS(\widehat{\beta})^{-1}(\widehat{\beta}_i-\overline{\beta})\right\Vert_2
\end{split}
\end{equation}
where  $S(\widehat{\beta})=\sum_{i=1}^N(\widehat{\beta}_i-\overline{\beta})(\widehat{\beta}_i-\overline{\beta})^T/N$. We declare the brain network of subject $i$ to be an outlier if either of $IM_1(i)$ or $IM_2(i)$ is large.

Finally, we need to define thresholds for classifying if the influence measures, $IM_1(i)$ or $IM_2(i)$, are ``large.'' For this, we use a data based scheme in which we plot the quantiles of the influence measure in question. Since most of the subjects have small influence measures, the graph remains flat for most of the lower quantiles and then starts to sharply increase as we get to the higher quantiles. The `elbow' of this plot, i.e. the point where the graph starts increasing sharply, can be used as a threshold. We use the kneedle algorithm \citep{5961514} to calculate the elbow point. We describe this thresholding method in more detail in section S2 of the supplement.  

The time complexity for calculation of the influence measures $IM_1(i)$ and $IM_2(i)$ for all networks $i$ is linear in the sample size, $N$, and quadratic in the number of edges, $L$. We verify this empirically in the simulation section. Our simulation studies also demonstrate good separation of outliers and non-outliers in terms of these influence measures.

\section{Simulation Study}

In this section, we carry out a number of simulation experiments to evaluate the performance of our outlier detection algorithm, ODIN, and understand the computational complexity of the associated steps.

\subsection{Computational Complexity}

As summarized in section 2, ODIN has two major steps: (1) Estimation of the model parameters $Z$ and the $\beta_i$'s and (2) Calculating the influence measures $IM_1(i)$ and $IM_2(i)$ for each $i$. In this subsection, we perform simulation experiments to study how sample size, $N$, and the number of edges, $L$, impact computational time for these two steps. 

The first step implements an iterative algorithm, so we investigate the effect of $L$, the number of edges, and $N$, the sample size, on the run-time of each iteration. We conduct two sets of simulations, one by keeping $L$ fixed and varying $N$ and one by keeping $N$ fixed and varying $L$. In these simulations, we generate $Z$ and $\beta_i$'s in our model from the standard Cauchy and Gaussian distributions, respectively, and use these to generate the adjacency matrix from our model. For each of these cases, we run 200 iterations of our algorithm and plot the average run-time for each iteration. As noted in the methods section, each iteration of the model fitting algorithm has linear run-time in both $N$ and $L$.

The next step is  calculation of the influence measures. Again we run two sets of simulations as in the previous paragraph. In each simulation we generate the data as before, iterate our estimation algorithm till convergence and calculate the influence measures. This time the run-time is recorded for only the calculation of the influence measures. Plotting this we conclude that the run-time is linear in the sample size $N$ and quadratic in $L$ as mentioned in the methods section. This quadratic complexity in $L$ does not make ODIN computationally prohibitive for the most popular atlases, as they rely on a moderate but not large number of ROIs.

\begin{figure}
\begin{subfigure}{.24\linewidth}
    \includegraphics[width=\linewidth]{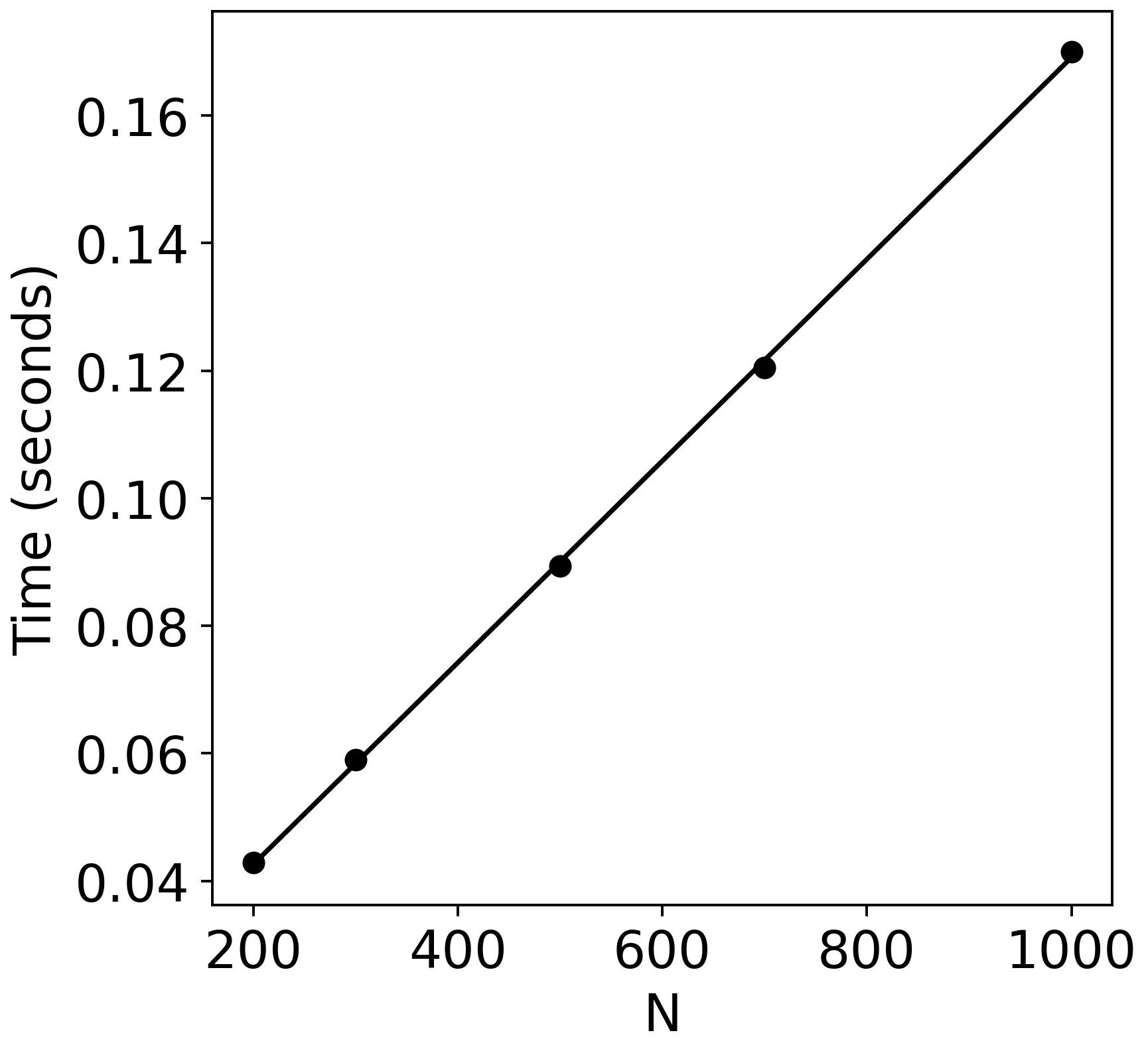}
    \caption{}
\end{subfigure}
\hfill
\begin{subfigure}{.24\linewidth}
    \includegraphics[width=\linewidth]{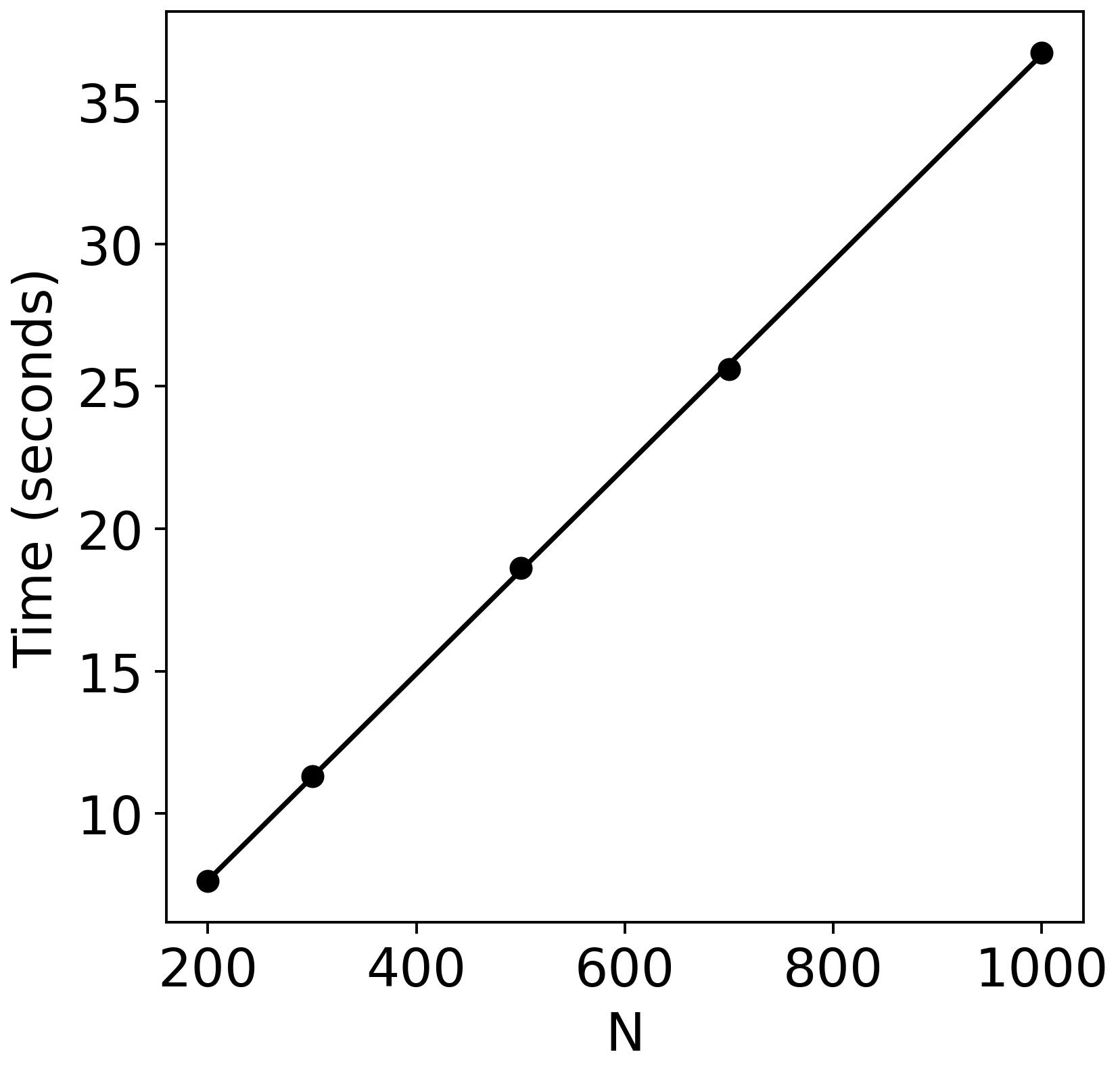}
    \caption{}
\end{subfigure}
\hfill
\begin{subfigure}{.24\linewidth}
    \includegraphics[width=\linewidth]{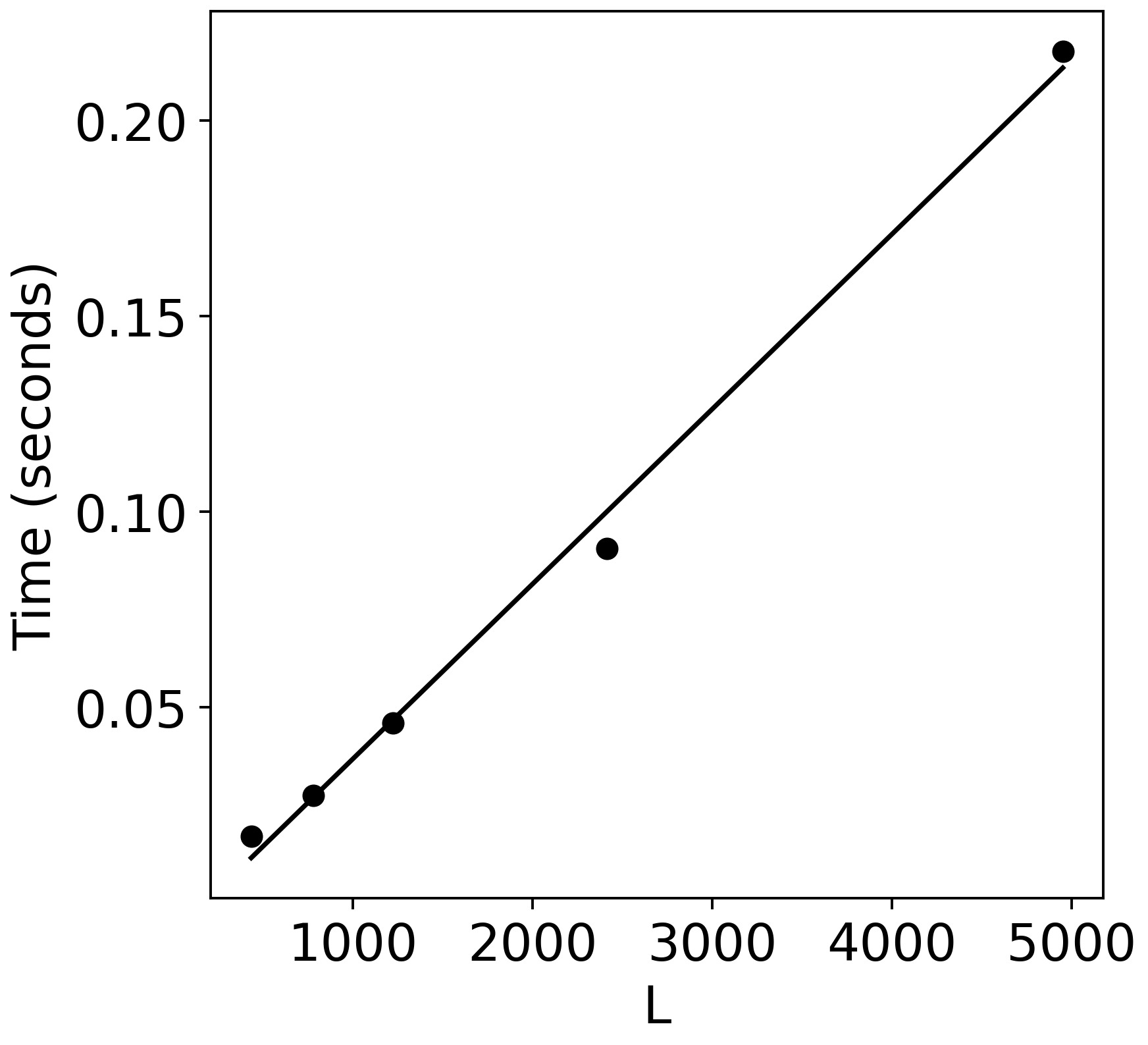}
    \caption{}
\end{subfigure}
\hfill
\begin{subfigure}{.24\linewidth}
    \includegraphics[width=\linewidth]{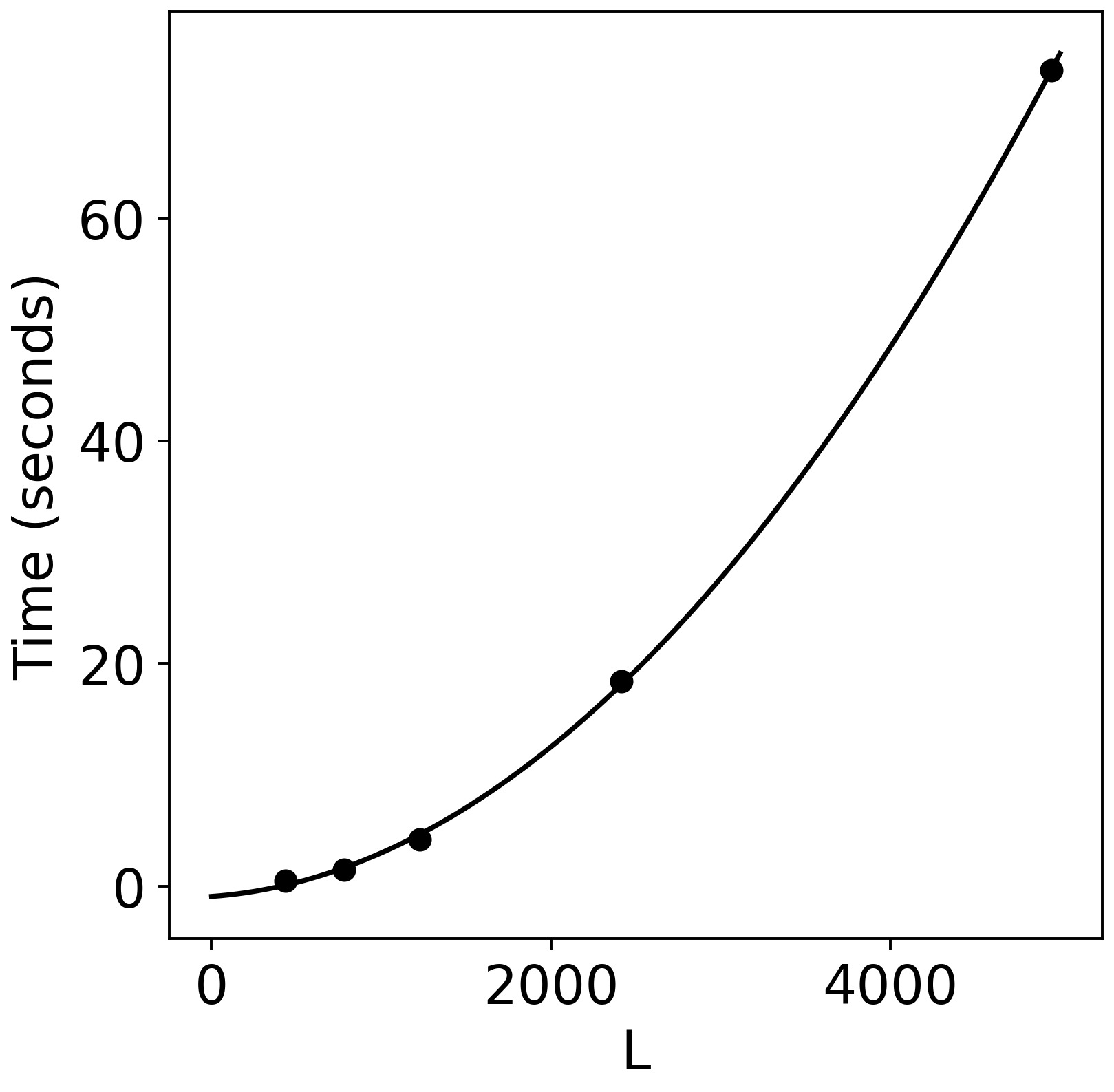}
    \caption{}
\end{subfigure}
\caption{Run-time (in seconds) for (a) each iteration of the estimation algorithm with respect to sample size, $N$; (b) calculation of the influence measures with respect to sample size, $N$; (c) each iteration of the estimation algorithm with respect to number of edges, $L = V(V-1)/2$ and (d) calculation of the influence measures with respect to the number of edges $L$. The first 3 are linear and the last one is quadratic.}
    \label{runtimegraph}
\end{figure}

\subsection{Performance in detecting outliers}

The goal in this section is to assess the performance of ODIN in terms of detection of outliers with varying levels of extremeness. We do this by generating synthetic data sets containing outliers and non-outliers in two different ways. For both these approaches we provide sensitivity (true positive rate) and specificity (true negative rate) values. 

\subsubsection{Simulation from model \texorpdfstring{\eqref{eq:logit}}{(1)}}

We simulate 500 synthetic binary brain networks having 70 ROIs each, with two hemispheres and five lobes within each hemisphere.  As  data from the UKB contain 68 ROIs using the Desikan atlas, our simulated data are similar in dimensionality. We use model \eqref{eq:logit} to simulate data by generating the vectors $Z$ and $\beta_i$'s from standard Cauchy and Gaussian distributions, respectively. We simulate `outliers' in the sample by randomly flipping a fixed percentage of the edges in 10\% of the generated adjacency matrices. We then apply ODIN on the simulated dataset.

Repeating this process many times for various percentages of edges flipped and noting the sensitivity (true positive rate) and specificity (true negative rate) gives us table \ref{senspe_our}. Figure \ref{boxplots} shows box plots of the influence measures for non-outliers and outliers for some choices of percentages of edges flipped. As we can see, even for changes in just 7\% of the edges, we start to notice that the influence measure tends to separate outliers from non-outliers. Table \ref{senspe_our} also shows that sensitivity and specificity values are very close to 100\% indicating our method's efficiency in outlier detection.

\begin{table}
\centering
\begin{tabular}{| >{\centering\arraybackslash}p{0.3\linewidth} | >{\centering\arraybackslash} p{0.3\linewidth}| >{\centering\arraybackslash}p{0.3\linewidth}|}
  \hline
Proportion of flipped edges & Average Sensitivity (5 repetitions) & Average Specificity (5 repetitions) \\ 
  \hline
1\% & 93\% & 94\% \\ 
2\% & 98.2\% & 95.6\% \\ 
7\% & 100\% & 97.1\% \\ 
10\% & 100\% & 97.1\% \\ 
15\% & 100\% & 98\% \\ 
   \hline
\end{tabular}
\vspace*{0.5cm}
\caption{Sensitivity and specificity of the outlier detection algorithm ODIN for network data simulated from model \eqref{eq:logit}.}
\label{senspe_our}
\end{table}

\begin{figure*}
\begin{subfigure}{.33\linewidth}
    \includegraphics[width=\linewidth]{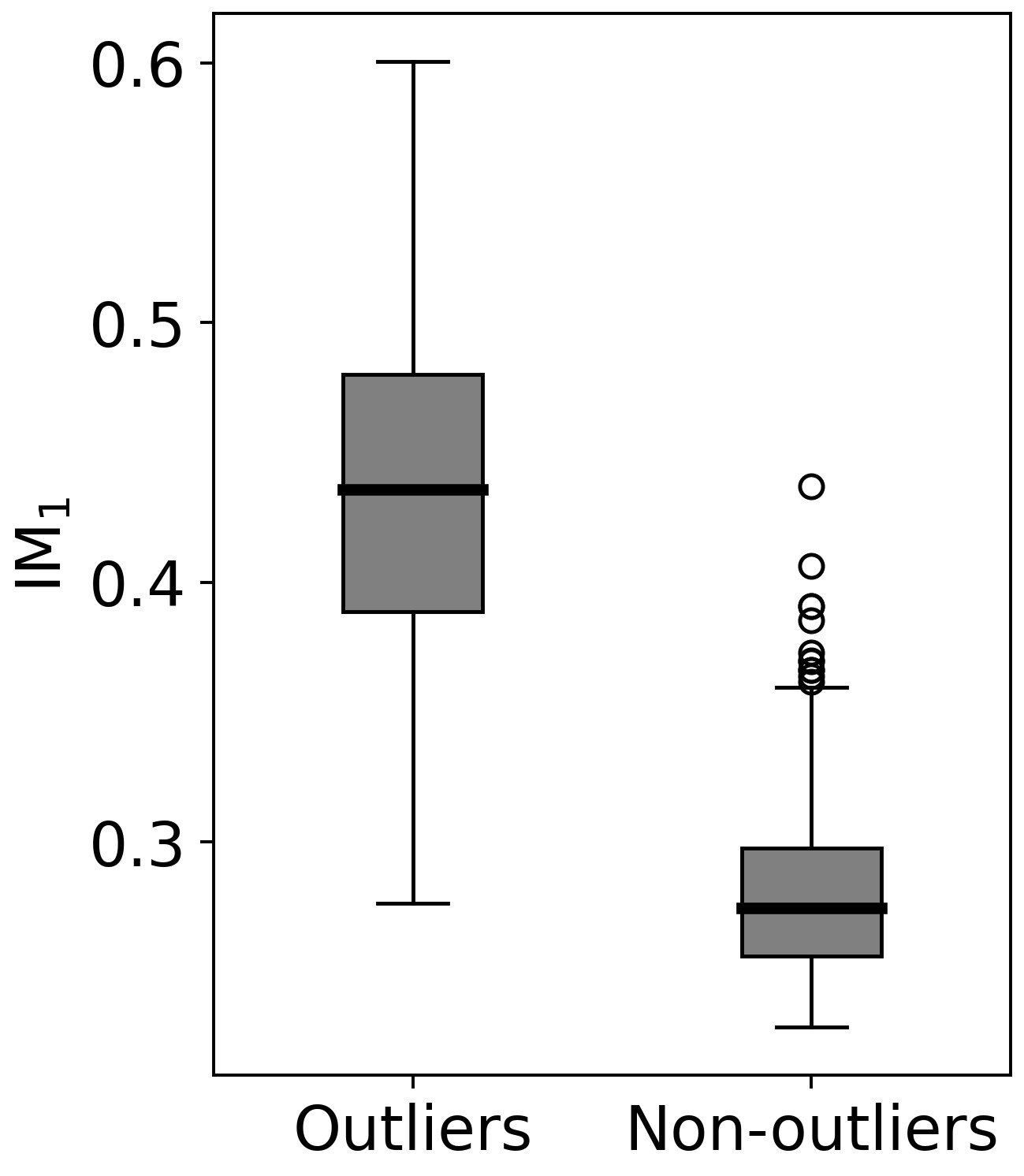}
\end{subfigure}
\hfill
\begin{subfigure}{.33\linewidth}
    \includegraphics[width=\linewidth]{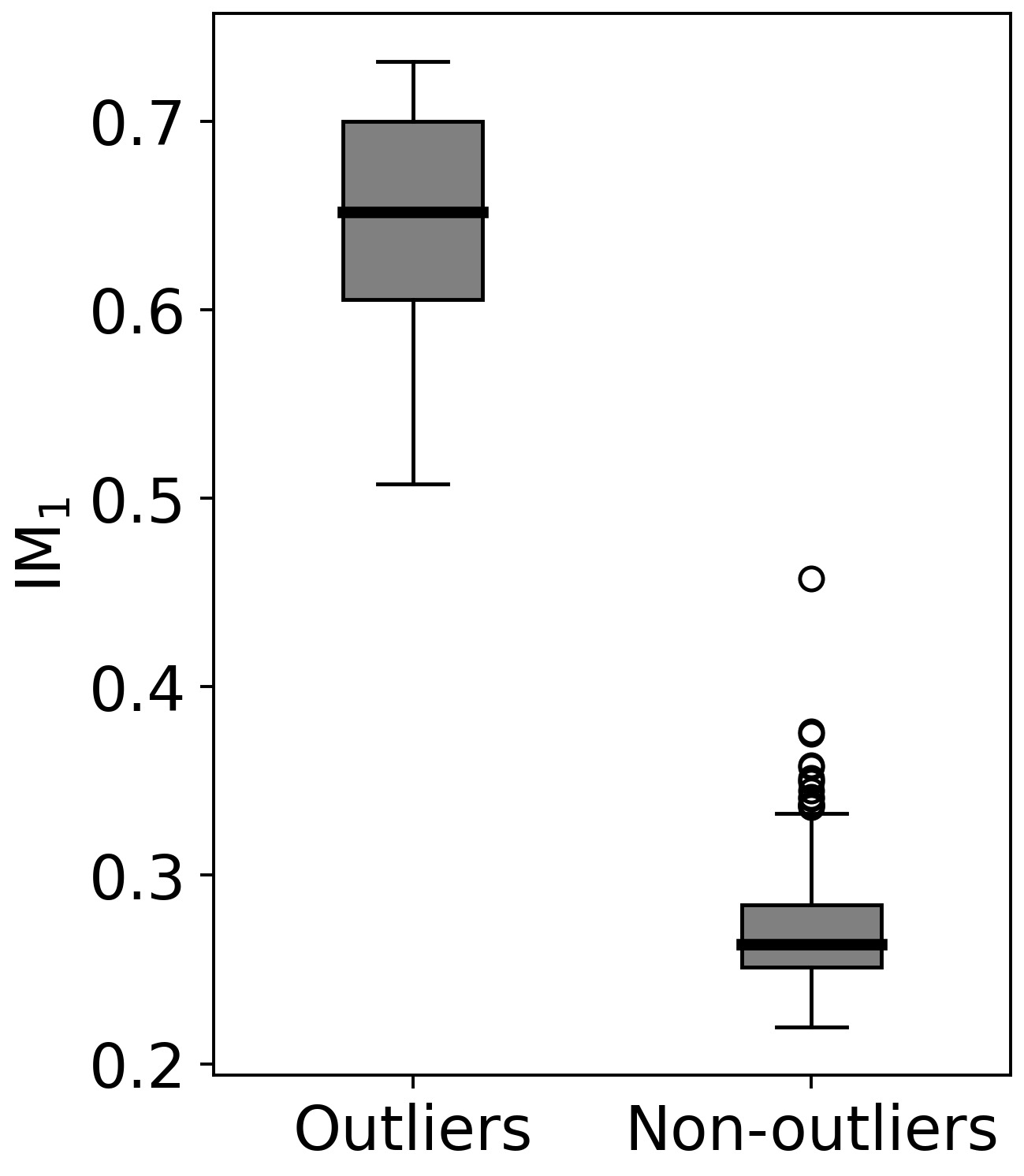}
\end{subfigure}
\hfill
\begin{subfigure}{.33\linewidth}
    \includegraphics[width=\linewidth]{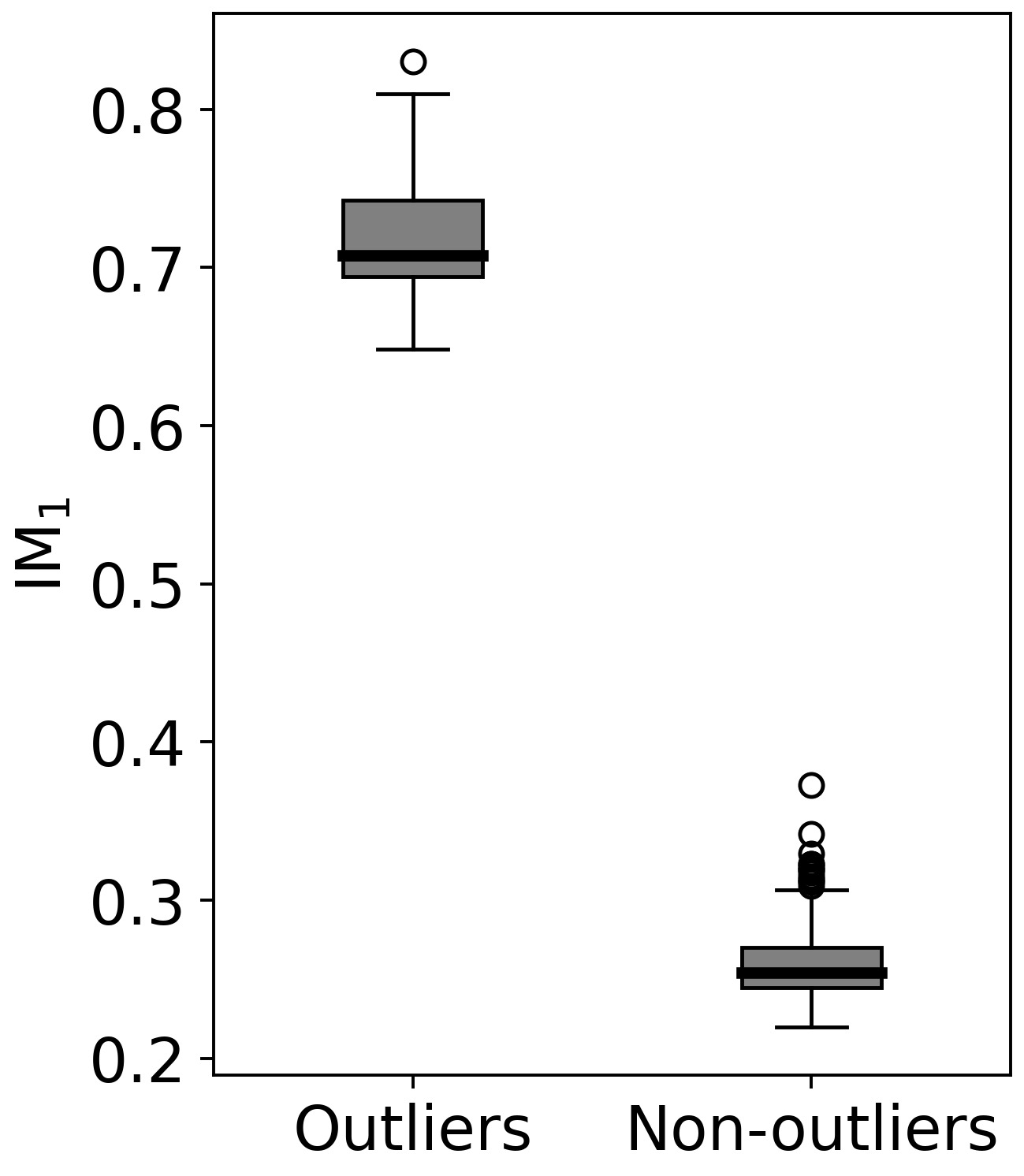}
\end{subfigure}
\caption{Box-plots of $IM_1(i)$ of outliers and non-outliers for data generated from model \eqref{eq:logit} with `outliers' simulated by flipping a fixed proportion of edges for some of the networks. The three figures are for 1\%, 5\% and 10\% of flipped edges respectively. This demonstrates that as outliers become more severe, ODIN can more easily distinguish outliers and non-outliers. }
\label{boxplots}
\end{figure*}

\subsubsection{Simulation using Tensor Network PCA}

To assess how well ODIN can detect outliers in practice, it is important to also measure performance under a very different generative model than \eqref{eq:logit}.
For this, we choose Tensor Network Principal Component Analysis \citep{zhang2019tensor} (TN-PCA) as the data generation scheme. Briefly, TN-PCA is an extension of standard principal component analysis (PCA) to the case of multi-network data. Just as standard PCA reduces dimensionality of vector valued data by embedding the data points in a lower dimensional space, TN-PCA does the same for data in the form of symmetric adjacency matrices. Essentially it minimizes the quantity $\sum_{i=1}^N \norm{A_i - \sum_{k=1}^K \lambda_{ik} \bm{v}_k \bm{v}_k^T }_2^2$ where the $\bm{v}_k$s are unit vectors and are pairwise orthogonal. The vector $\bm{\lambda}_i = (\lambda_{i1},\lambda_{i2},\hdots,\lambda_{iK})$ can be treated as an embedding of the adjacency matrix $A_i$ in $K$-dimensional Euclidean space. In this description, we used slightly different notation than the original paper.

Since TN-PCA is not a generative model explicitly, but rather an embedding method, we use the following steps to simulate data which resemble the UKB brain networks. First we take a small subsample of size 500 from the UKB data, with this 
subsample not containing any outliers (based on ODIN).  Then we perform TN-PCA on the adjacency matrices from that subsample. This gives us vectors $\bm{v}_k$ for $1\leq k \leq K$, and $\bm{\lambda}_i$ for $1 \leq i \leq 500$. We modify these $\bm{\lambda}_i$ as $\widetilde{\bm{\lambda}}_i = \bm{\lambda}_i + \delta_i \bm{\epsilon}_i$ where $\delta_i = 1$ for a randomly selected 10\% of the $i$'s and $\delta_i = 0$ for the rest and $\bm{\epsilon}_i \sim Normal(0,\sigma^2 I)$. We then construct adjacency matrices $\widetilde{A}_i$ by first calculating $S_i = \sum_{k=1}^K \widetilde{\lambda}_{ik} \bm{v}_k \bm{v}_k^T$ and then setting each entry in $\widetilde{A}_i$ to 0 or 1 depending on whichever is closer to the corresponding entry of $S_i$. 

In this way of simulation, the 10\% of the subjects for which $\delta_i=1$ can be thought of as `outliers'. We run ODIN on this sample and repeat this for many choices of $\sigma$, the standard deviation of the Gaussian noise, and note the sensitivity (true positive rate) and specificity (true negative rate) in table \ref{senspe_tnpca}. These results suggest that ODIN is quite efficient in detecting outliers and can easily distinguish them from non-outliers.

\begin{table}
\centering
\begin{tabular}{| >{\centering\arraybackslash}p{0.3\linewidth} | >{\centering\arraybackslash} p{0.3\linewidth}| >{\centering\arraybackslash}p{0.3\linewidth}|}
  \hline
SD of Gaussian Noise & Average Sensitivity (5 repetitions) & Average Specificity (5 repetitions) \\ 
  \hline
0.010 & 71\% & 91.2\% \\ 
0.015 & 97\% & 92.7\% \\ 
0.017 & 98\% & 92.6\% \\ 
0.020 & 100\% & 93\% \\ 
0.025 & 100\% & 92.4\% \\ 
   \hline
\end{tabular}
\vspace*{0.5cm}
\caption{Sensitivity and specificity of the outlier detection algorithm ODIN for network data simulated using TN-PCA.}
\label{senspe_tnpca}
\end{table}

\section{Application to UK Biobank data}

We applied ODIN to a sample of 18,083 brain networks from the UK Biobank (UKB) dataset. We use the Desikan \citep{desikan2006automated} atlas based ROI representation of the brain networks. Although this dataset contains information on hundreds of thousands of subjects, we focus on data for the 18,083 individuals having relevant data to construct adjacency matrices for structural brain connection networks.
Of these, we detected 1,931 (around 10.6\%) networks as outliers.  

\subsection{Exploratory view of outliers vs. non-outliers}

We have already displayed some of the detected outliers in figure \ref{adjfigintro} in the introduction section. This figure shows one possible difference between outliers and non-outliers may be in the number of edges which are connected. Indeed when we plot the distributions of connected edges among outliers and non-outliers in figure \ref{numberedgesconnected} it confirms that many of the outliers have very few or too many connected edges compared to non-outliers.

\begin{figure*}
\begin{subfigure}{0.47\linewidth}
    \includegraphics[width=\linewidth]{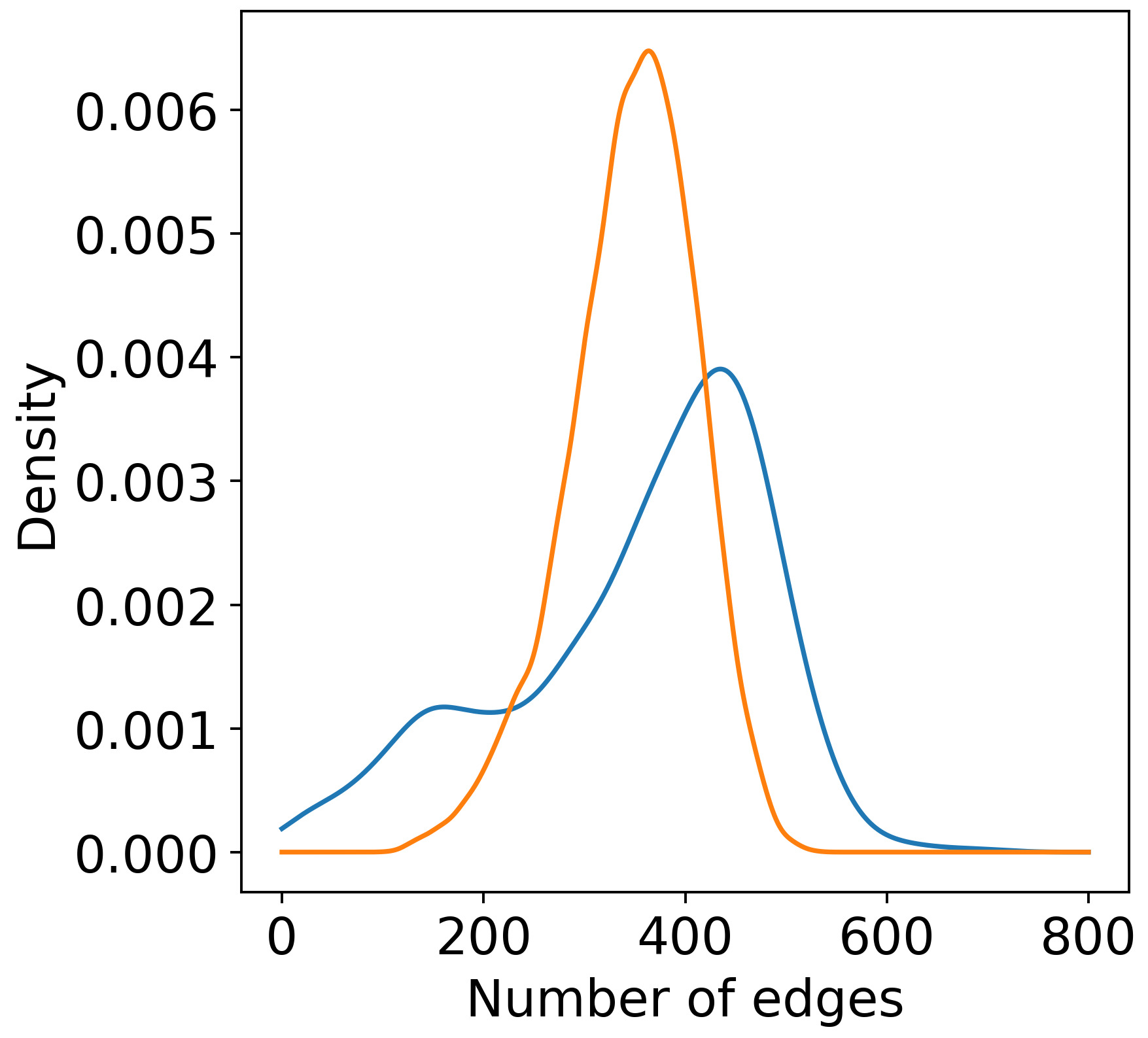}
    \caption{Distribution of number of edges connecting two ROIs located in different hemispheres.}
\end{subfigure}\hfill
\begin{subfigure}{0.47\linewidth}
    \includegraphics[width=\linewidth]{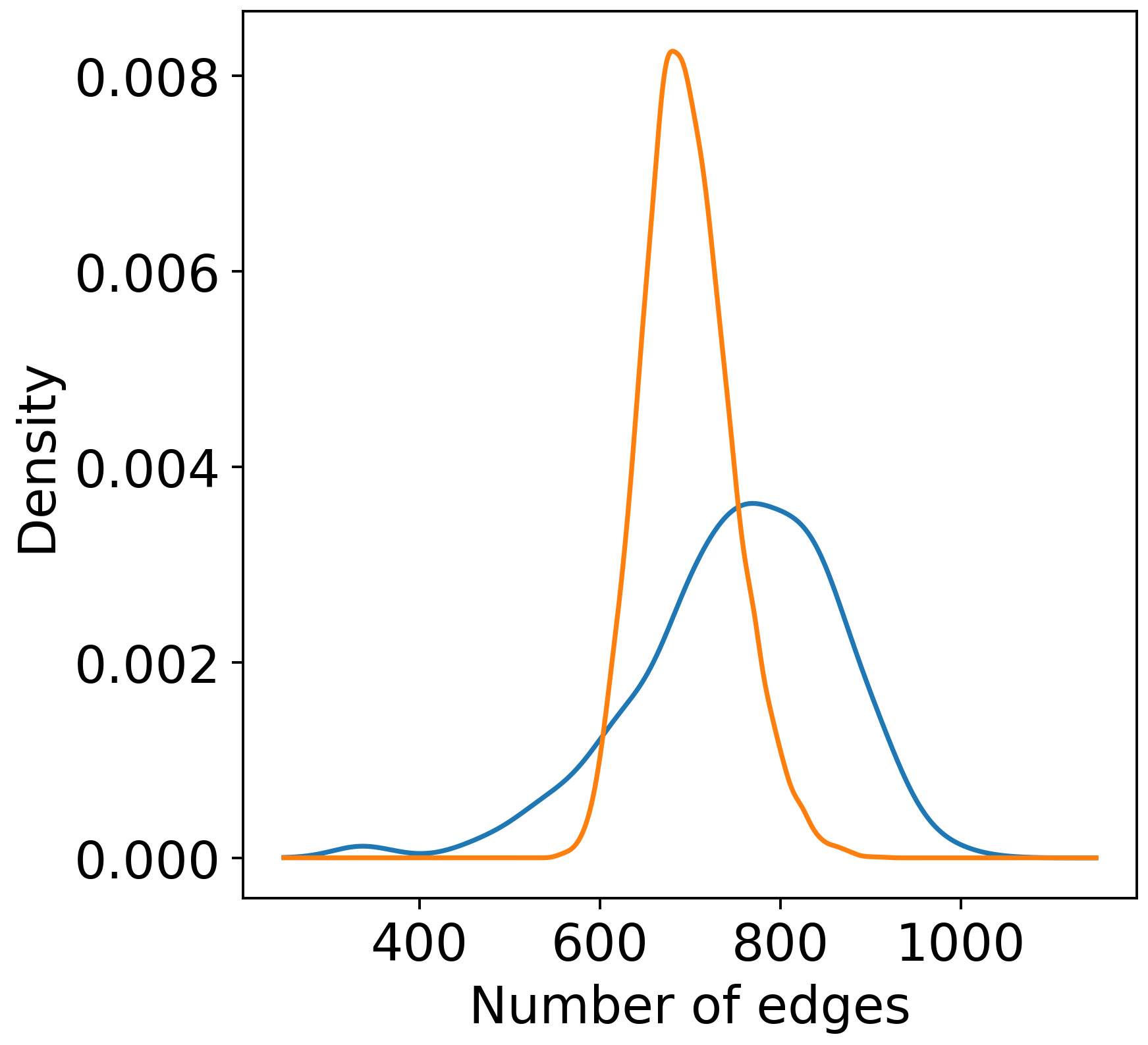}
    \caption{Distribution of number of edges connecting two ROIs located in the same hemisphere.}
\end{subfigure}
\caption{An exploratory view of the difference between outliers and non-outliers detected by ODIN. In each of the figures, the blue graphs represent outliers and the orange graph represents non-outliers.}
\label{numberedgesconnected}
\end{figure*}

\subsection{Impact of sample size on outlier detection}

ODIN detects outliers based on a sample from a population of networks. This raises the question of the extent to which the sample size impacts performance in outlier detection.
To investigate this question, we took stratified subsamples of several different sizes with each subsample containing 10\% outliers (as detected in the full sample) and 90\% non-outliers. We applied ODIN separately to each of these subsamples.
We were interested to see how many of these outliers are also detected as outliers in the subsamples. We were also interested to see how many non-outliers in the full sample were detected as outliers in a subsample. The results from this experiment are presented table \ref{popanalysis}. It appears that even over various different ranges of subsample size, ODIN still classifies most subjects classified as outliers in the full sample as outliers in the subsample while also classifying very few non-outliers from the full sample as outliers in a subsample.

\begin{table*}
\centering
\begin{tabular}{| >{\centering\arraybackslash}p{0.1\textwidth}| >{\centering\arraybackslash}p{0.2\textwidth}| >{\centering\arraybackslash}p{0.2\textwidth}| >{\centering\arraybackslash}p{0.2\textwidth}| >{\centering\arraybackslash}p{0.2\textwidth}|}
\hline
Subsample Size &
Number of outliers from full sample included in subsample &
Number of observations in subsample classified as outliers &
Number of subsample outliers which were outliers in full sample &
Number of non-outliers in full sample classified as outliers in subsample \\ \hline
200  & 20  & 19  & 14 (70\%)    & 5 (1.04\%)  \\
500  & 50  & 52  & 42 (84\%)    & 10 (2.22\%) \\
1000 & 100 & 105 & 88 (88\%)    & 17 (1.89\%) \\
2000 & 200 & 200 & 186 (93\%)   & 14 (0.08\%) \\
5000 & 500 & 509 & 481 (96.2\%) & 28 (0.06\%) \\ \hline
\end{tabular}
\vspace*{0.5cm}
\caption{This table demonstrates the dependence of outlier detection on sample size. ODIN was applied to stratified subsamples of the full sample to see if subjects detected as outliers in the full sample are also detected as outliers in subsamples of different sizes. The percentage values in the fourth column is the percentage of full sample outliers classified as outliers in the subsample. The percentage value in the fifth column are the percentage of full sample non-outliers which are classified as outliers in the subsample.}
\label{popanalysis}
\end{table*}

\subsection{Case study: Impact of removing outliers on a subsequent analysis using TN-PCA}

In addition to the brain network data, the UKB data also contain measurements of traits of various types including substance use, physical characteristics, cognitive abilities, levels of physical activity and measures of mental health. Some of the trait data are missing for some subjects. In this subsection, we assess the impact of removing outliers on inferences relating brain networks to traits.

To illustrate how outliers can impact brain network inferences, including assessments of relationships with cognitive traits, we focus on the impact of outliers on the TN-PCA algorithm. We perform TN-PCA on both the full sample and the full sample with all outliers removed. We then reconstruct the networks from the TN-PCA components and study the differences in connectivity of each edge in the reconstructed networks between high and low scoring individuals for several different trait scores. 

Doing this provides us with a picture of how different the vector representations obtained from TN-PCA of the networks are between subjects with high and low scorers with respect to a specific cognitive trait. We do this for 3 specific traits: numeric memory, symbol digit substitution and fluid intelligence.  

\subsubsection{Numeric memory}

In this test, the participants were shown a 2-digit number to remember. The number then disappeared and after a short while they were asked to enter the number onto the screen. The number became one digit longer each time they remembered correctly (up to 12 digits). The score is the maximum number of digits the participant remembered correctly. This is to assess short term memory.

\begin{figure*}
    \centering
\includegraphics[width=0.7\linewidth]{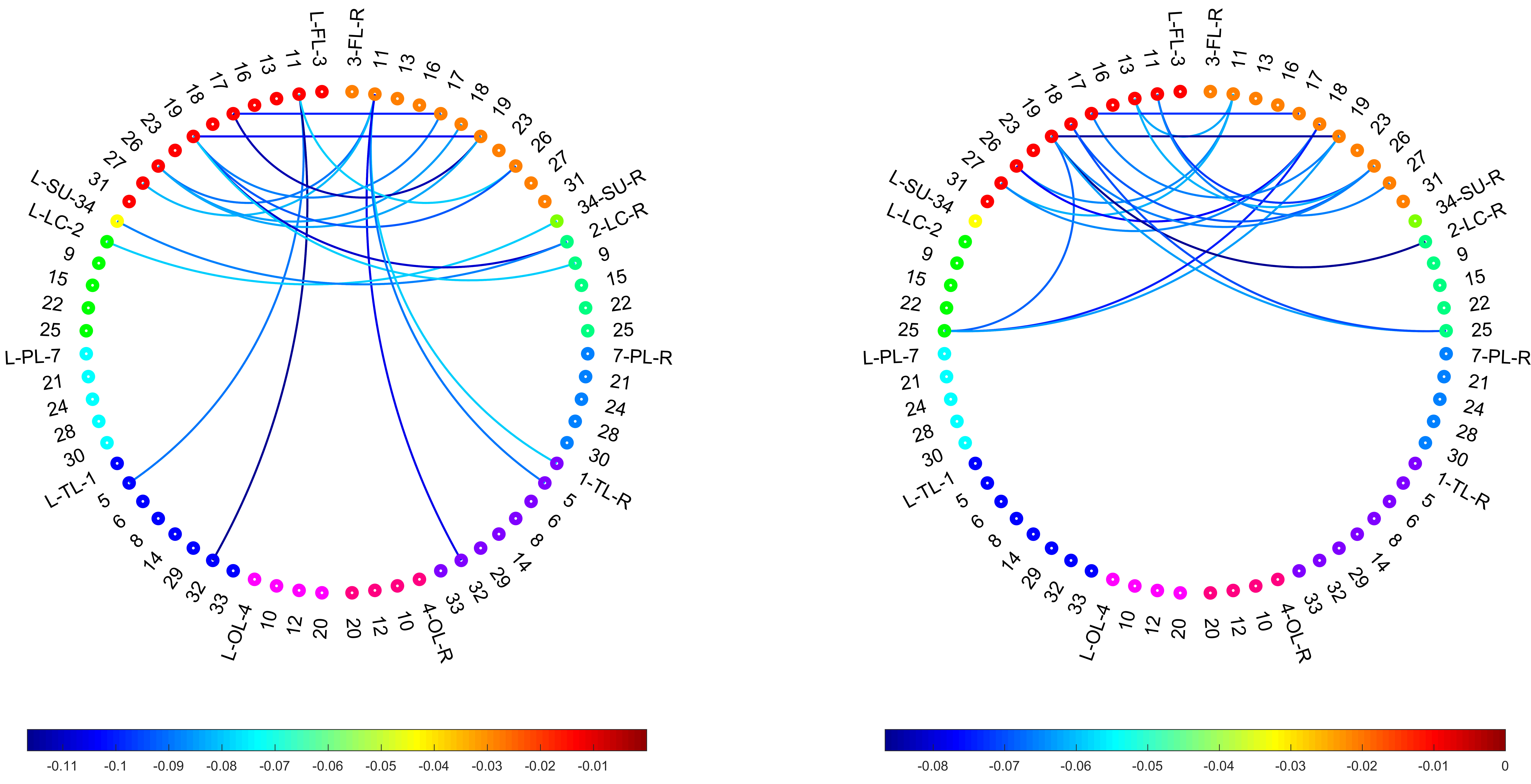}
    \caption{Changes in brain connectivity with increasing numeric memory scores with (left) and without (right) outliers. The 20 edges having the most change in connectivity are shown to improve visualization. The dots on the circle boundary represent ROIs and are colour coded to represent which lobe and hemisphere it belongs to. For a more complete description of the ROI labels see section S3 of the supplement.}
    \label{Nummemchange}
\end{figure*}

Figure \ref{Nummemchange} shows the difference between high and low scorers in this test. In this case, the pattern of differences in connectivity between the groups is essentially completely changed with removal of outliers. After removal of outliers, the resulting pattern strongly suggests that high and low scoring individuals differ primarily in the connectivity in the frontal lobe. This is in agreement with previous knowledge \citep{jacobsen1936studies,funahashi1993dorsolateral,pribram1952effects} that the frontal lobe, in particular the prefrontal cortex, plays a major role in short term memory.

\subsubsection{Symbol digit substitution}

In this test, participants were presented with a series of grids in which symbols were to be matched to numbers according to a key presented on the screen. We consider the number of symbols correctly matched by each participant. Again the differences in brain connectivity for the 10\% lowest and 10\% highest scorers is shown in figure \ref{Symbolchange}. In this case removing outliers seems to significantly change the detected difference in connectivity among high and low scorers in this trait.

\begin{figure*}
    \centering
\includegraphics[width=0.7\linewidth]{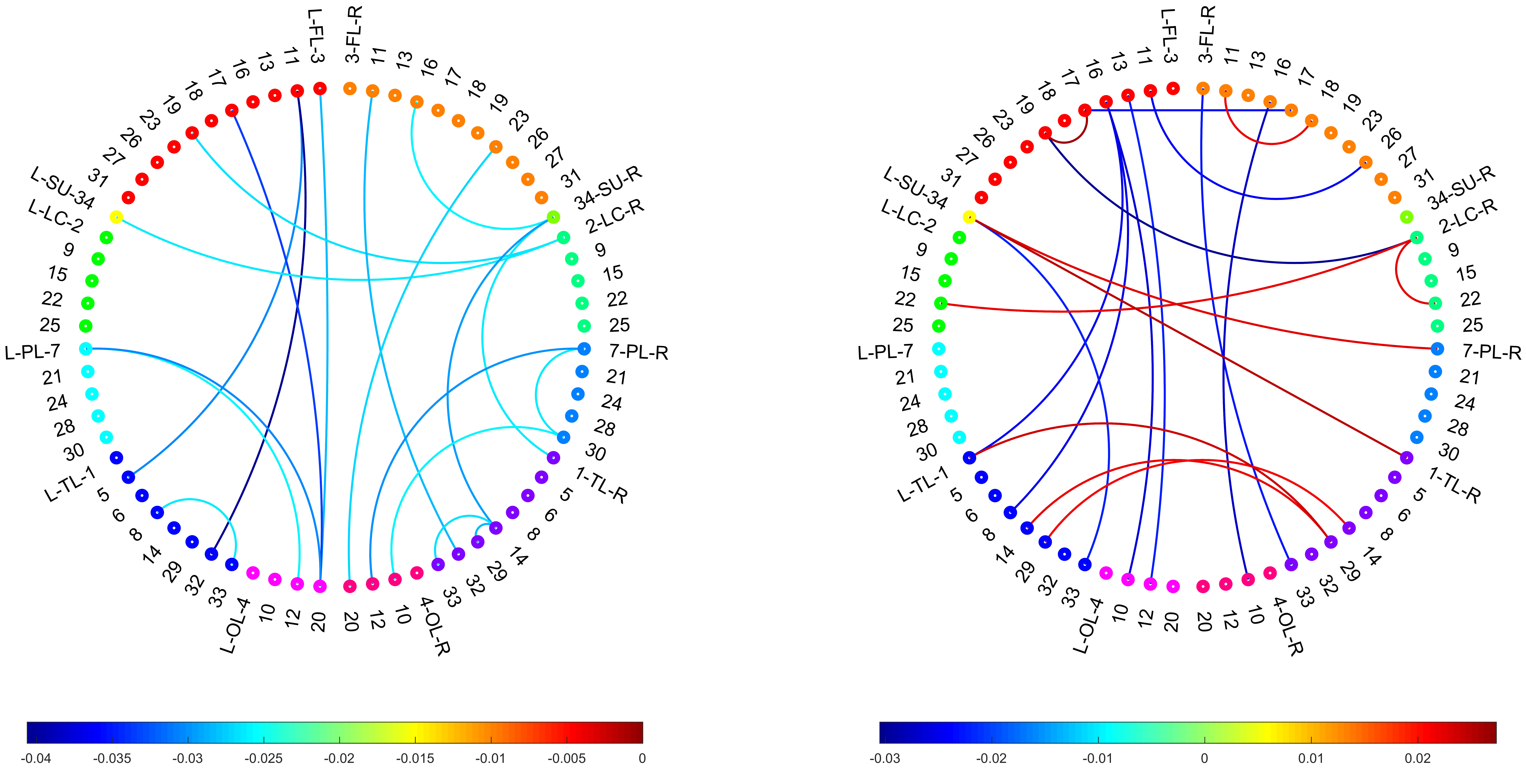}
    \caption{Changes in brain connectivity with increasing symbol digit substitution scores with (left) and without (right) outliers. The 20 edges having the most change in connectivity are shown to improve visualization. The dots on the circle boundary represent ROIs and are colour coded to represent which lobe and hemisphere it belongs to. For a more complete description of the ROI labels see section S3 of the supplement.}
    \label{Symbolchange}
\end{figure*}

\subsubsection{Fluid intelligence}

Participants were asked 13 questions designed to assess `Fluid intelligence', i.e., the capacity to solve problems that require logic and reasoning ability, independent of pre-acquired knowledge. Each participant was given 2 minutes to complete as many questions as possible from the test. Figure \ref{Fluidchange} shows the differences in brain connectivity with and without outliers among the 10\% lowest and 10\% highest scorers for the fluid intelligence test. In this case, removing outliers has some impact on the inferred differences, but the impact is more subtle than in the previous two examples. In both cases fluid intelligence seems to be highly related to connections in the frontal lobe, a fact that seems to line up with literature \citep{geake2005neural} on the subject.

\begin{figure*}
    \centering
    \includegraphics[width=0.7\linewidth]{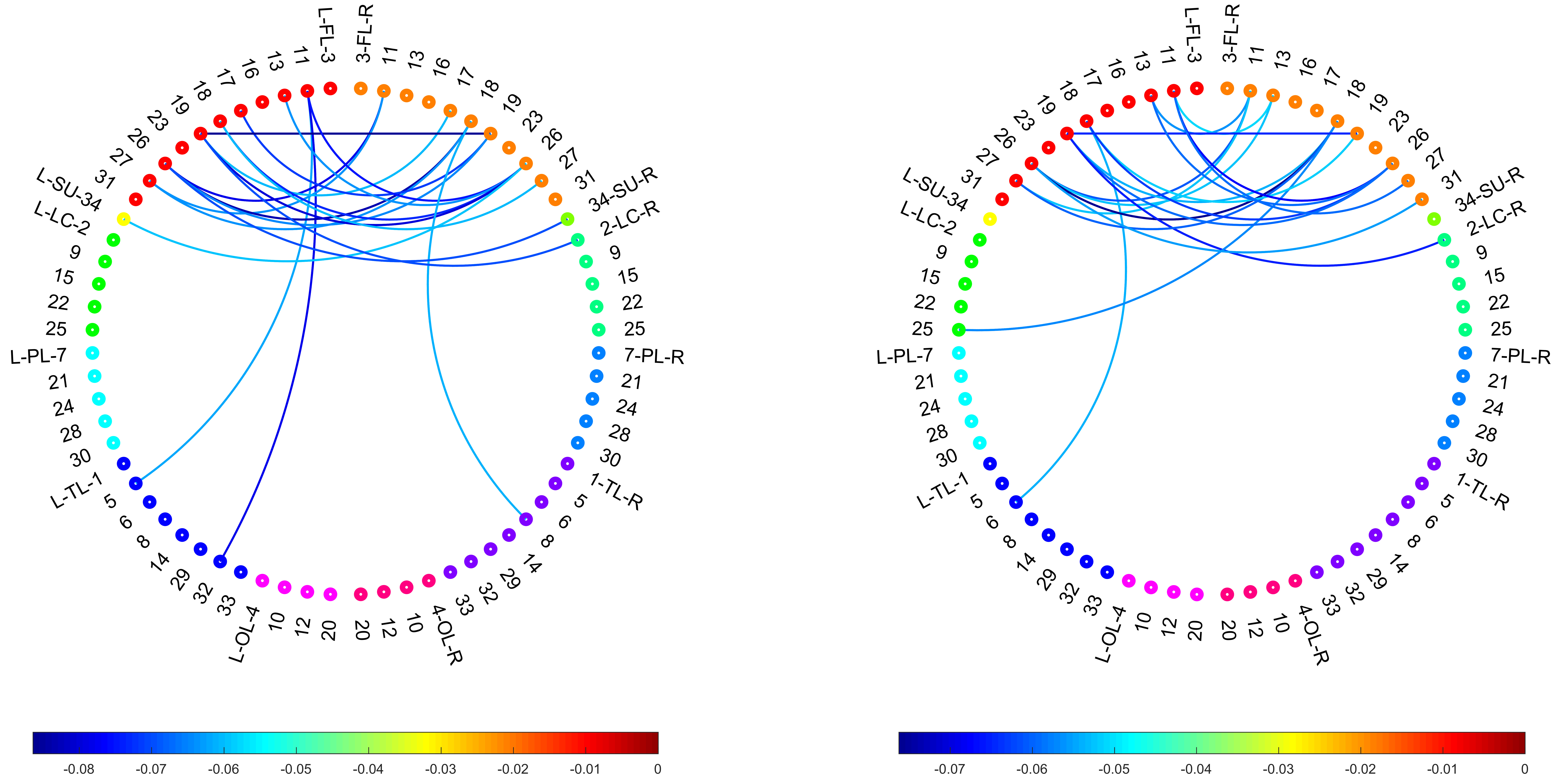}
    \caption{Changes in brain connectivity with increasing fluid intelligence scores with (left) and without (right) outliers. The 20 edges having the most change in connectivity are shown to improve visualization. The dots on the circle boundary represent ROIs and are colour coded to represent which lobe and hemisphere it belongs to. For a more complete description of the ROI labels see section S3 of the supplement.}
    \label{Fluidchange}
\end{figure*}

The above three case studies demonstrate that outliers can greatly influence inferences on how brain networks relate to traits, to the extent of fundamentally altering
scientific conclusions.

\section{Discussion}

In this article, we propose a fast, simple and effective approach to the problem of outlier detection for brain structural connectivity data represented as binary (i.e., unweighted) adjacency matrices. Our method, ODIN, involves first fitting a simple model based on logistic regression which contains both population and subject level parameters. Then using the fitted model, ODIN measures outlyingness based on: (1) approximating the change in the pMLE of the population parameters when a subject is dropped and (2) tracking how extreme the subject-specific parameters are. The resulting measures of influence are then thresholded to classify subjects as outliers/non-outliers. Using different simulations and an application on a large dataset, we demonstrated that ODIN is fast and effective in detecting outliers. We also illustrated how removing outliers can fundamentally impact scientific conclusions from brain network studies.

There are several natural next steps building on the ODIN approach.  An important extension is to generalize ODIN to accommodate weighted adjacency matrices.  If the weights are modeled parametrically via a generalized linear model, then this extension is very straightforward.  For example, if the weights consist of the number of fibers connecting each pair of ROIs, then we could use a negative binomial log-linear model in place of the logistic regression.  If the weights are instead continuous, we could instead use a Gaussian linear model. This modification makes ODIN directly applicable to outlier detection for functional connectivity (FC).  For example, FC networks are often expressed in a weighted form as correlation matrices. 
Applying a Fisher transformation to these correlations, one can use the same linear predictor as in \eqref{eq:logit} but within a Gaussian linear model instead of a logistic regression.  

A more ambitious next step is to detect outliers in SC and FC simultaneously, potentially leveraging on a multivariate extension of model \eqref{eq:logit} which allows for SC-FC dependence. In addition, the distribution of the weights may not be well characterized by simple parametric GLMs and it becomes interesting to develop methods that accommodate flexible distributions for the weights.

\section*{Acknowledgements}

This research has been conducted using the UK Biobank Resource under application number 51659.

\section*{Funding}
This work was supported by the National Institutes of Mental Health [MH118927, AG066970] of the United States National Institutes of Health.

\section*{Data Availability}
The data underlying this article were provided by UK Biobank under licence / by permission. Data will be shared on request to the corresponding author with permission of UK Biobank.

\bibliographystyle{unsrtnat}
\bibliography{references}
\end{document}


\maketitle
\graphicspath{{figures/}}

\newcommand{\beginsupplement}{%
        \setcounter{table}{0}
        \renewcommand{\thetable}{S\arabic{table}}%
        \setcounter{figure}{0}
        \renewcommand{\thefigure}{S\arabic{figure}}%
        \setcounter{section}{0}
        \setcounter{subsection}{0}
        \setcounter{equation}{0}
        \renewcommand{\theequation}{S\arabic{equation}}
        \renewcommand{\thesection}{S\arabic{section}}
}
\beginsupplement

This document contain additional material including mathematical results, algorithms and other details to support the paper `Outlier Detection for Multi-Network Data.' All mathematical notations introduced in the main paper are also valid for this document.

\section{Mathematical Proofs and Results}

\begin{lemma}
(i) \textbf{Inverse of block matrix:} If an $n \times n$ symmetric matrix $M$ is partitioned into four blocks as $M=\begin{bmatrix}
A & B\\
B^T & C 
\end{bmatrix}$, where $A$ and $C$ are invertible square matrices of arbitrary size and the Schur complement of $C$ in $M$ ($A-BC^{-1}B^T$) is invertible, then the inverse of $M$ can be expressed as:
\begin{equation}
M^{-1}=\begin{bmatrix}
A & B\\
B^T & C 
\end{bmatrix}^{-1} = \begin{bmatrix}
(A-BC^{-1}B^T)^{-1} & -(A-BC^{-1}B^T)^{-1}BC^{-1}\\
-C^{-1}B^T(A-BC^{-1}B^T)^{-1} & C^{-1} + C^{-1}B^T(A-BC^{-1}B^T)^{-1}BC^{-1} 
\end{bmatrix}
\end{equation}
(ii) If $\begin{pmatrix} x_1 \\ x_2 \end{pmatrix}= M^{-1} \begin{pmatrix} v_1 \\ v_2 \end{pmatrix}$ then the following formulas hold:
\begin{align}
    x_1 & = (A-BC^{-1}B^T)^{-1}(v_1-BC^{-1}v_2) \label{S1}\\
    x_2 & = C^{-1}(v_2 - B^Tx_1) \label{S2}
\end{align}
\label{invlemma}
\end{lemma}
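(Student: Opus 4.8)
The plan is to treat part (ii) as the computational core and to recover part (i) from it, because part (ii) is nothing more than block Gaussian elimination applied to $M$. Writing $x = (x_1^T, x_2^T)^T$ and $v = (v_1^T, v_2^T)^T$, the relation $x = M^{-1} v$ is equivalent to $Mx = v$, which in block form reads $A x_1 + B x_2 = v_1$ and $B^T x_1 + C x_2 = v_2$. First I would solve the second of these for $x_2$, using invertibility of $C$, to obtain $x_2 = C^{-1}(v_2 - B^T x_1)$, which is precisely \eqref{S2}. Substituting this expression into the first equation and collecting the terms involving $x_1$ gives $(A - BC^{-1}B^T) x_1 = v_1 - BC^{-1} v_2$; since the Schur complement $A - BC^{-1}B^T$ is invertible by hypothesis, this rearranges to \eqref{S1}.

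For part (i), I would then observe that the $j$-th column of $M^{-1}$ is simply $x = M^{-1} v$ with $v$ the $j$-th standard basis vector; more cleanly, keeping $v_1, v_2$ generic and substituting \eqref{S1} into $x_2 = C^{-1}(v_2 - B^T x_1)$ expresses both $x_1$ and $x_2$ as explicit linear functions of $(v_1, v_2)$. Reading off the coefficient blocks of $v_1$ and $v_2$ then yields the four blocks of the claimed formula: the $(1,1)$ block $(A-BC^{-1}B^T)^{-1}$ and the $(1,2)$ block $-(A-BC^{-1}B^T)^{-1}BC^{-1}$ come straight from \eqref{S1}, while the $(2,1)$ block $-C^{-1}B^T(A-BC^{-1}B^T)^{-1}$ and the $(2,2)$ block $C^{-1} + C^{-1}B^T(A-BC^{-1}B^T)^{-1}BC^{-1}$ come from substituting into \eqref{S2}.

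An alternative, perhaps more transparent, route to part (i) is the block $LDL^T$ factorization
\[
M = \begin{bmatrix} I & BC^{-1} \\ 0 & I \end{bmatrix} \begin{bmatrix} A - BC^{-1}B^T & 0 \\ 0 & C \end{bmatrix} \begin{bmatrix} I & 0 \\ C^{-1}B^T & I \end{bmatrix},
\]
which one checks by direct multiplication (the third factor is the transpose of the first, using symmetry of $C$). Inverting each of the three factors — the two triangular factors have unit diagonal, so their inverses simply negate the off-diagonal block, and the middle factor is block-diagonal — and multiplying the inverses back together in reverse order reproduces the stated expression for $M^{-1}$. Either way there is no genuine obstacle here: the argument is routine linear algebra, and the only points needing care are bookkeeping the transposes so that $M^{-1}$ comes out manifestly symmetric, and invoking the two invertibility hypotheses (on $C$ and on the Schur complement $A - BC^{-1}B^T$) exactly at the steps where those inverses are formed, which is what makes the formulas well defined.
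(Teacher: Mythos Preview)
Your argument is correct, but it is organized in the reverse order from the paper's. The paper simply treats (i) as a standard identity to be \emph{verified}: one multiplies $M$ by the claimed $M^{-1}$ and checks that the product is the identity. Part (ii) is then deduced from (i) by multiplying out $M^{-1}v$ and simplifying. You instead derive (ii) first via block Gaussian elimination on $Mx=v$, and then read off (i) from the linear map $v\mapsto x$ (or, alternatively, via the block $LDL^T$ factorization). Your route is more constructive---it explains where the Schur-complement formula comes from rather than merely confirming it---while the paper's route is shorter and avoids any derivation, at the cost of being non-explanatory. Either way the content is routine linear algebra and no step is in doubt.
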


\begin{proof}
Part (i) is a standard result which can be verified by matrix multiplication of $M$ and $M^{-1}$ in the given forms. Part (ii) follows easily from part (i) by explicit multiplication and simple algebraic manipulations.
\end{proof}

\subsection{MM algorithm for estimation of \texorpdfstring{$(Z,\beta_1,\beta_2,...,\beta_N)$}{Z,b1,b2,...bN}}
Concatenating $(Z,\beta_1,\beta_2,...,\beta_N)$ into a single parameter vector $\theta$ in that order, allows us to calculate the gradient vector and Hessian matrix: 
\begin{equation}
    \nabla \mathcal{L}(Z,\{\beta_j\}_{j=1}^N) = \frac{1}{N} 
    \begin{pmatrix}
    \sum_{i=1}^N (\bm{a}_i-\bm{\pi}_i)\\
    X^T (\bm{a}_1-\bm{\pi}_1)\\
    X^T (\bm{a}_2-\bm{\pi}_2)\\
    \vdots\\
    X^T (\bm{a}_N-\bm{\pi}_N)
    \end{pmatrix}
    -
    \begin{pmatrix}
    \lambda Z \\
    0\\
    0\\
    \vdots\\
    0
    \end{pmatrix}
    \label{gradient_supp}
\end{equation}

\begin{equation}
    \mathcal{H}(Z,\{\beta_j\}_{j=1}^N) = -\frac{1}{N} \begin{bmatrix}
\sum_{i=1}^N W_i &\vline & W_1^TX  & \hdots &W_N^TX \\
\hline
X^TW_1 &\vline & X^TW_1X &  \hdots & 0\\

\vdots &\vline & \vdots &\ddots &\vdots\\
X^TW_N &\vline & 0  & \hdots & X^TW_NX \\
\end{bmatrix}-\lambda \begin{bmatrix}
I &\vline & 0  & \hdots &0 \\
\hline
0 &\vline & 0 & \hdots & 0\\
\vdots &\vline & \vdots & \ddots & \vdots\\
0 &\vline & 0 & \hdots & 0 \\
\end{bmatrix}
\label{hessian_supp}
\end{equation}

The first matrix in the equation (\ref{hessian_supp}) can be expressed as $\Tilde{X}^TW\Tilde{X}$ where $W$ is a diagonal matrix consisting of the matrices $W_1,W_2,...,W_N$ stacked diagonally into a single matrix and
\begin{equation}
    \Tilde{X}=\begin{bmatrix}
I &\vline & X &0 & \hdots &0 \\
I &\vline & 0 &X & \hdots & 0\\
\vdots &\vline &\vdots & \vdots & \ddots & \vdots\\
I &\vline & 0 & 0 & \hdots & X \\
\end{bmatrix}
\end{equation}

Further we note that every entry in the diagonal matrix $W$ is bounded above by $\frac{1}{4}$. So, the difference $\frac{1}{4}\Tilde{X}^T\Tilde{X}-\Tilde{X}^TW\Tilde{X}$ is non-negative definite which gives us the following minorizing function for MM algorithm,
\begin{equation}
    g\left(\theta\mid\theta^{(k)}\right) = \mathcal{L}\left(\theta^{(k)}\right) + \nabla \mathcal{L}\left(\theta^{(k)}\right)^T \left(\theta-\theta^{(k)}\right)+
\frac{1}{2}\left(\theta-\theta^{(k)}\right)^TM\left(\theta-\theta^{(k)}\right)
\end{equation}
where $M$ is obtained by replacing $\frac{1}{4}\Tilde{X}^T\Tilde{X}$ in place of $\Tilde{X}^TW\Tilde{X}$ in the Hessian matrix. So,
\begin{equation}
    M=-\frac{1}{4N}\begin{bmatrix}
(4\lambda+1)NI &\vline & X & X  & \hdots &X \\
\hline
X^T &\vline &X^TX & 0 & \hdots & 0\\
X^T &\vline &0 & X^TX & \hdots & 0\\
\vdots &\vline & \vdots & \vdots & \ddots & \vdots\\
X^T &\vline & 0 & 0 & \hdots & X^TX \\
\end{bmatrix}
\end{equation}

The maximizer of $g\left(\theta\mid\theta^{(k)}\right)$ is then $\theta^{(k+1)}=\theta^{(k)}-M^{-1}\nabla \mathcal{L}\left(\theta^{(k)}\right)$ which means that using part (ii) of the lemma, the updates of $Z$ and $\beta_i$ are:
\begin{align}
       Z^{(k+1)}&=Z^{(k)}+4\left[4\lambda I+I-X(X^TX)^{-1}X^T\right]^{-1}  \nonumber
       \\& \qquad \qquad \qquad \left[-\lambda Z^{(k)}+\left(I-X(X^TX)^{-1}X^T\right)\frac{\sum_{i=1}^N\left(\bm{a}_i-\bm{\pi}_i^{(k)}\right)}{N}\right] \\
       \beta_i^{(k+1)}&=\beta_i^{(k)} + 4(X^TX)^{-1}X^T\left[\left(\bm{a}_i-\bm{\pi}^{(k)}_i\right)-N\left(Z^{(k+1)}-Z^{(k)}\right)\right]
\end{align}

As we are not updating the whole vector at once, we save a lot of computation with large matrices. The $\beta_i$'s are not very large in dimension in general, so the only computation with large matrices that has to be done in every iteration are the matrix multiplications in the update of $Z$.

\SetKwInput{KwInput}{Input}                
\SetKwInput{KwOutput}{Output}              

\begin{algorithm}[H]
\DontPrintSemicolon
  \KwData{Binary adjacency matrices: $A_1,A_2,\hdots,A_N$}
  \KwInput{Hemisphere and lobe memberships for every ROI in the atlas, $\lambda$ and a tolerance value for the stopping condition.}
  \KwOutput{Estimated parameter vectors $(\widehat{Z},\widehat{\beta}_1,\widehat{\beta}_2,\hdots,\widehat{\beta}_N)$}
  
  Vectorize the adjacency matrices by taking only the elements below the diagonal to get vectors $\{\bm{a}_i\}_{i=1}^N$. \;
  Form the $X$ matrix using the hemisphere and lobe relationships of every pair of ROIs.\;
  Calculate the matrices $Q=I-X(X^TX)^{-1}X^T$, $R=4(Q+4\lambda I)^{-1}$ and $S=4(X^TX)^{-1}X^T$.\;
  Initialize the parameters
  $k=0$, $Z=Z^{(0)}$, $\beta_i=\beta_i^{(0)}$ and evaluate the objective function at the initial values.\;
\While{not stopping condition}{
  Calculate $\left\{\left(\bm{a}_i-\bm{\pi}_i^{(k)}\right)=\bm{a}_i-logit\left(Z^{(k)}+X\beta_i^{(k)} \right)\right\}_{i=1}^N$ and $\mu^{(k)}=\frac{\sum_{i=1}^N\left(\bm{a}_i-\bm{\pi}_i^{(k)}\right)}{N}$\;
  \textbf{Update $\bm{Z}$:} $Z^{(k+1)}=Z^{(k)}+R\left[-\lambda Z^{(k)}+Q\mu^{(k)}\right]$\;
  \textbf{Update $\bm{\beta_i}$:} $\beta_i^{(k+1)}=\beta_i^{(k)}+S\left[\left(\bm{a}_i-\bm{\pi}_i^{(k)}\right)+N\left(Z^{(k+1)}-Z^{(k)}\right)\right]$ $\forall i$\;
  Calculate objective function at $Z^{(k+1)}$ and $\left\{\beta_i^{(k+1)}\right\}_{i=1}^N$\;
  Set k=k+1
 }
\caption{MM Algorithm for estimation of $(Z,\beta_1,\beta_2,\hdots,\beta_N)$}
return $\widehat{Z}=Z^{(k-1)}$ and $\left\{\widehat{\beta}_i=\beta_i^{(k-1)}\right\}_{i=1}^N$
\end{algorithm}

\subsection{Proof of Proposition 1}

The objective function to minimize to estimate the model parameters after excluding the $i^{th}$ unit is given by 
$\mathcal{L}(\widehat{Z},\{\widehat{\beta_j}\}_{j\neq i}) =\frac{1}{N-1}\sum_{j\neq i} \sum_{l=1}^L \left[a_{jl}\widehat{\eta_{jl}}-log(1+e^{\widehat{\eta_{jl}}})\right] -\frac{\lambda}{2}\norm{\widehat{Z}}^2_2$. Using the lemma \ref{invlemma}, and the gradient and hessian given in \ref{gradient_supp} and \ref{hessian_supp} the first Newton Raphson update in $Z$ starting from $\widehat{Z}$ is:
\begin{align}
    Z^{(1)}_{-i} &=\widehat{Z}+\left[\lambda I +\frac{\sum_{j \neq i} W_j}{N-1} -\frac{\sum_{j \neq i} B_{j}^T Q_j^{-1} B_{j}}{N-1} \right]^{-1} \nonumber
    \\ & \qquad \qquad \qquad \qquad \left[\frac{\sum_{j \neq i}\left(I- B_j^T Q_j^{-1} X^T\right) (\bm{a}_j - \widehat{\bm{\pi}}_j)}{(N-1)}-\lambda\widehat{Z}\right]
\end{align}

But since $(\widehat{Z},\widehat{\beta}_1,...,\widehat{\beta}_N)$ is the pMLE, 
$\nabla_{Z} \mathcal{L}(\widehat{Z},\{\widehat{\beta_j}\}_{j=1}^N) = \frac{1}{N }\sum_{i=1}^N (\bm{a}_i-\bm{\widehat{\pi}}_i)-\lambda \widehat{Z} = \bm{0}$ and $\nabla_{\beta_j} \mathcal{L}(\widehat{Z},\{\widehat{\beta_j}\}_{j=1}^N)=\frac{1}{N} X^T (\bm{a}_j-\bm{\widehat{\pi}}_j) = \bm{0}$. So, 
\begin{align}
    \frac{1}{N-1}\sum_{j \neq i}\left(I - B_j^T Q_j^{-1} X^T\right) (\bm{a}_j - \widehat{\bm{\pi}}_j)-\lambda\widehat{Z}&=\frac{1}{N-1}\sum_{j \neq i}(\bm{a}_j - \widehat{\bm{\pi}}_j)-\lambda\widehat{Z} \nonumber \\& =-\frac{1}{N-1}\left[(\bm{a}_i - \widehat{\bm{\pi}}_i)-\lambda\widehat{Z}\right]
\end{align}
So, putting all this together, we get
\begin{align}
    Z^{(1)}_{-i} - \widehat{Z}  &= -\frac{1}{N-1}\left[\lambda I +\frac{\sum_{j \neq i} W_j}{N-1} -\frac{\sum_{j \neq i} B_{j}^T Q_j^{-1} B_{j}}{N-1} \right]^{-1} \left[(\bm{a}_i - \widehat{\bm{\pi}}_i)-\lambda\widehat{Z}\right] \nonumber \\
    &=-\left[\lambda (N-1) I +\sum_{j \neq i} W_j -\sum_{j \neq i} B_{j}^T Q_j^{-1} B_{j} \right]^{-1} \left[(\bm{a}_i - \widehat{\bm{\pi}}_i)-\lambda\widehat{Z}\right]
\end{align}
\qed

\section{Thresholding influence measures for outlier detection}

The basic idea is that most of the subjects have very low values for these influence measures. Only a handful will possibly have high values of these measures. 

\begin{figure}[H]
    \centering
    \includegraphics[width=\linewidth]{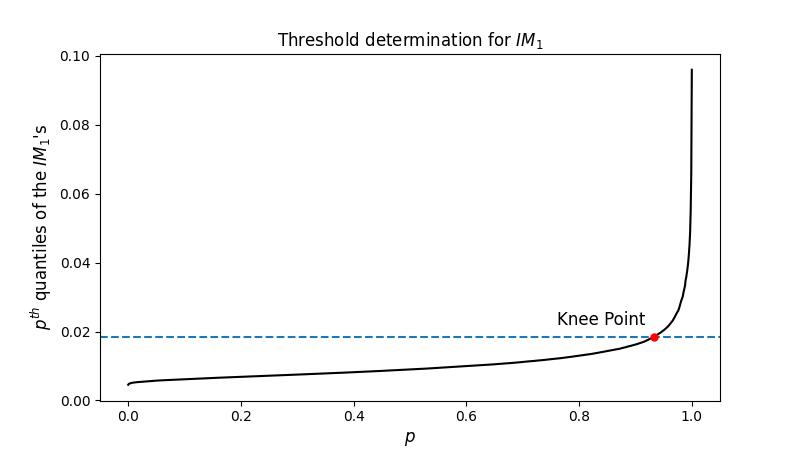}
    \caption{A plot of the quantiles of $IM_2(i)$. Here we plot the $p^{th}$ quantile of the $IM_2(i)$ values for $p \in [0,1]$. As mentioned in section 2 of the paper, the graph remains relatively flat for most of the lower quantiles and then starts to sharply increase as we get to the higher quantiles. The 'elbow' of this plot, i.e. the point where the graph starts increasing sharply can be used as a threshold. This point is shown in red in the graph. It is found using the 'kneedle' algorithm \citep{5961514}. }
    \label{threshhold}
\end{figure}

\section{Additional description of the regions of interest for the circular plots}

The circular plots in the main document, an example of which is given here have several dots arranged around a circular region. Each dot represents a region of interest. These dots are colour-coded to represent which lobe and hemisphere they come from.

\begin{figure}[H]
    \centering
    \includegraphics[width=0.6\linewidth]{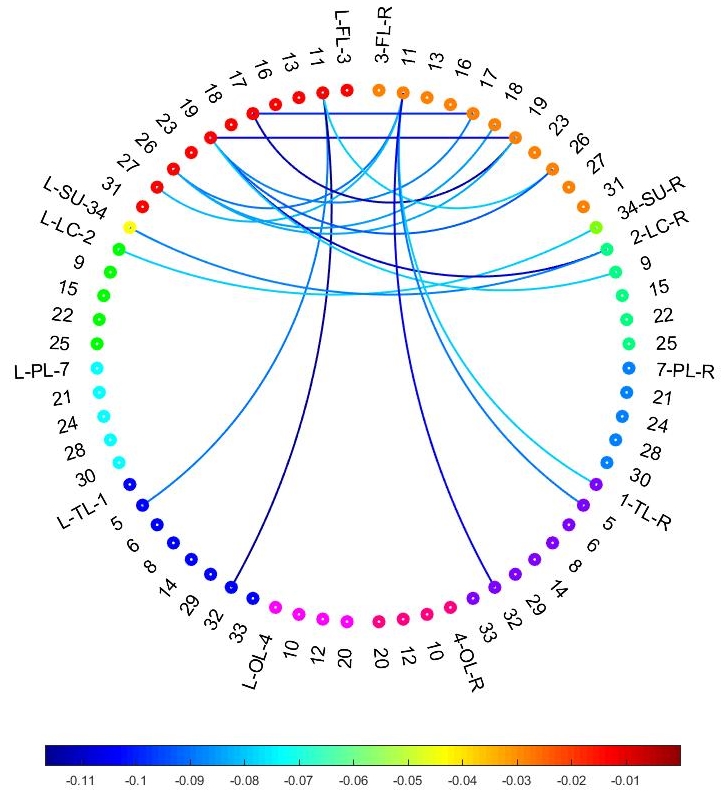}
    \caption{An example of the circular plots in the application section in the main document.}
    \label{cir_plot}
\end{figure}

Each ROI is indicated by a number which is its position in the Desikan atlas. For one of the ROI's in each lobe and hemisphere combination, in addition to the numbers there is also a text in the figure of the form "X-YY". Here "X" is either "L" or "R" representing the left and right hemispheres respectively. "YY" can be FL, SU, LC, PL, TL or OL. These represent the frontal lobe, insula, limbic lobe, parietal lobe, temporal lobe and occipital lobe respectively. All other ROIs with the same colour code belongs to the same lobe and hemisphere as the one with text described here.

\bibliographystyle{unsrtnat}
\bibliography{references.bib}